\newif\ifsubmit
\ifsubmit \usepackage[letterpaper]{geometry}
\newcommand{\metrictsp}{\text{Metric-TSP}\xspace}
\begin{document}

\title{Faster Approximations for \metrictsp via \mbox{Linear
    Programming}\thanks{Department of Computer Science, University of Illinois,
Urbana-Champaign, Urbana, IL 61801. {\tt \{chekuri,quanrud2\}@illinois.edu}.
Work on this paper is partly supported by NSF grant CCF-1526799.}
} %

\author{
Chandra Chekuri
\and
Kent Quanrud
}
\maketitle

\setcounter{tocdepth}{3}

\onehalfspacing

\begin{abstract}
  We develop faster approximation algorithms for \metrictsp building
  on recent, nearly linear time approximation schemes for the LP
  relaxation \citep{cq-17-focs}. We show that the LP solution can be
  sparsified via cut-sparsification techniques such as those of
  \citet{bk-15}.  Given a weighted graph $G$ with $m$ edges and $n$
  vertices, and $\eps > 0$, our randomized algorithm outputs with high
  probability a $(1+\eps)$-approximate solution to the LP relaxation
  whose support has $O(n \log n/\eps^2)$ edges. The running time of
  the algorithm is $\apxO{m/\eps^2}$. This can be generically used to
  speed up algorithms that rely on the LP.

  For \metrictsp, we obtain the following concrete result. For a
  weighted graph $G$ with $m$ edges and $n$ vertices, and $\eps > 0$,
  we describe an algorithm that outputs with high probability a tour
  of $\graph$ with cost at most
  \begin{math}
    \epsmore \frac{3}{2}
  \end{math}
  times the minimum cost tour of $\graph$ in time
  \begin{math}
    \apxO{\frac{m}{\eps^2} + \frac{n^{1.5}}{\eps^3}}.
  \end{math}
  Previous implementations of Christofides' algorithm
  \citep{christofides-76} require, for a $\frac{3}{2}$-optimal tour,
  $\apxO{n^{2.5}}$ time when the metric is explicitly given, or
  $\apxO{\min{m^{1.5}, mn+n^{2.5}}}$ time when the metric is given
  implicitly as the shortest path metric of a weighted graph.
\end{abstract}

\section{Introduction}
\label{sec:intro}
The traveling salesman problem (abbr.\ TSP) and its variants including
\metrictsp are extensively studied in discrete and combinatorial
optimization. In this short paper we focus on approximation algorithms
for \metrictsp which is NP-Hard.  An instance of \metrictsp is
specified by a complete undirected graph $G=(V,E,c)$ with positive
edges costs $c: E \rightarrow \preals$ satisfying the triangle
inequality (and form a metric over $V$). The goal is to find a
Hamiltonian cycle in $G$ of minimum cost.  Note that if $G$ is not
required to be a complete graph, then the minimum cost Hamiltonian
cycle is inapproximable, as deciding if an undirected graph has a
Hamiltonian cycle is NP-Complete.

\begin{definition}
  An instance of \metrictsp is \emph{explicit} if the input is a
  metric on $n$ nodes with all $n \choose 2$ distances specified.  An
  instance of \metrictsp is \emph{implicit} if it is specified
  implicitly as the metric completion of an underlying graph
  $\defweightedgraph$.
\end{definition}

\begin{remark}
  A sparse graph with $m$ edges and $n$ vertices generates a metric
  TSP problem that is inherently of size $\bigO{n^2}$. Given an
  implicit instance of \metrictsp, it is desirable to obtain running
  times relative to the number of edges $m$ in the underlying graph,
  which represents the true input size of the problem.
\end{remark}

\iffalse We say that an instance of \metrictsp is an \emph{explicit} instance
if $G$ is specified as a complete graph on $n$ nodes with $n \choose
2$ distances given as part of the input.  We say that an instance is an
\emph{implicit} instance if $G$ is specified as weighted graph on $m$
edges and $n$ nodes where the metric on $V$ is induced by the shortest
path distances in the graph.
\fi
Many instances of \metrictsp are
implicitly defined, either as a sparse graph, or in other settings
where there is an easy function that returns $c(u,v)$ given nodes
$u,v$ (geometric instances in low dimensions are a good example).
In this paper we focus on the setting where $G$ is specified as a
sparse weighted graph. In such a setting, finding a solution corresponds
to finding a tour of the vertices of minimum total cost. This is equivalent
to finding a minimum cost Eulerian \emph{multigraph} in the support of the
given graph $G=(V,E,c)$.

\cite{christofides-76} described a $\frac{3}{2}$-approximation for \metrictsp
and this is still the best known. There is a well-known conjecture
that a $\frac{4}{3}$-approximation is achievable via a solution to an LP
relaxation called the \emph{subtour elimination LP} due to
\cite{dfj-54}; see \refsection{lp} for a description of this LP
referred to as \refequation{subtour-elimination} and also a related LP
\refequation{2ecss} whose optimum values are equal for instances of
\metrictsp. There has been recent exciting progress on various special
cases for which we refer the reader to a survey by \cite{vygen-12}.

The goal of this paper is to find \emph{fast} approximation algorithms
for \metrictsp. One can easily obtain a $2$-approximation in nearly
linear time by traversing the minimum spanning tree (abbr.\
MST). Christofides's algorithm requires the computation of a
minimum-cost perfect matching (see \refsection{christofides} for more
details). The fastest known implementation requires $\apxO{n^{2.5}}$
time when the metric is explicitly given and
$\apxO{\min(m^{1.5}, mn+n^{2.5})}$ time when it is given implicitly as
a weighted graph \citep{gt-91}. In a recent paper we obtained the
following result to compute a nearly linear time approximation scheme
for the LP relaxation \refequation{2ecss}.

\begin{theorem}[{\citealp{cq-17-focs}}]
  \labeltheorem{apx-2ecss} There is a randomized algorithm that in
  $\apxO{m/\eps^2}$ time, with high probability, computes a feasible point $x \in
  \reals^{\edges}$ for \refequation{2ecss} with objective value
  $\sum_{e \in \edges} \cost{e} x_e \leq \epsmore
  \opt\refequation{2ecss}$.
\end{theorem}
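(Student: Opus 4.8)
The plan is to start from the observation that \refequation{2ecss} is one side of a pure packing/covering pair: the covering side asks (up to scaling) for a minimum-cost nonnegative $x \in \reals^{\edges}$ with $\sum_{e \in \delta(S)} x_e \ge 2$ for every proper nonempty vertex set $S$, and its dual is the cut-packing LP $\max \sum_S 2 y_S$ subject to $\sum_{S : e \in \delta(S)} y_S \le \cost{e}$ for every edge $e$. I would run a multiplicative-weights / Lagrangian scheme on this pair. I pick this route because in the packing view the oracle needed in each step is exactly one global minimum-cut computation --- the cut of least total weight under the current edge weights --- for which near-linear-time randomized algorithms exist, and because a covering solution $x$ can be read off directly from the final edge weights. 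That such a scheme returns a feasible $x$ with $\sum_{e \in \edges}\cost{e} x_e \le \epsmore\,\opt\refequation{2ecss}$ is classical; the content lies in the running time.

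\textbf{The update loop.} Concretely, I maintain a weight $w_e$ per edge, initialized proportionally to $\cost{e}$, together with the load $\ell_e$ currently placed on $e$ by the partial packing. Each step computes a $\epsmore$-approximate minimum-weight cut $S^*$, raises $y_{S^*}$ by the largest amount that overshoots no edge capacity by more than a $\epsmore$ factor, adds the resulting increments to $\ell_e$ for $e \in \delta(S^*)$, and multiplies each such $w_e$ by $\exp(\eta\,\Delta\ell_e/\cost{e})$ with $\eta = \Theta(\eps^{-1}\log n)$; the loop stops when the rescaled packing value reaches its target. The standard Garg--K\"onemann analysis then shows that the normalized final weights, restricted to edges that ever carried load, form a point feasible for \refequation{2ecss} whose cost is within a $\epsmore$ factor of $\opt\refequation{2ecss}$, while a charging argument --- each edge can be the bottleneck of a step only $\apxO{1/\eps^2}$ times --- bounds the number of steps by $\apxO{m/\eps^2}$.

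\textbf{The hard part: near-linear running time.} The main obstacle is that a naive implementation (a) recomputes a global min cut from scratch each step, at $\apxO{m}$ cost, and (b) touches every edge of the chosen cut to update its load and weight; either alone already exceeds the $\apxO{m/\eps^2}$ budget. For (a) I would use cut sparsification: by \citet{bk-15}, work throughout on a graph with $O(n\log n/\eps^2)$ edges whose cut values are all preserved within a $\epsmore$ factor --- harmless, since the oracle only needs $\epsmore$-approximate answers --- and, instead of recomputing the min cut afresh, cache a short list of candidate near-minimum cuts. By Karger's bound only polynomially many cuts lie within a $\epsmore$ factor of the minimum, so such a list can be reused while it stays accurate under the monotone weight increases, invoking a fresh randomized approximate-min-cut routine only when every candidate has degraded; a potential argument (the minimum cut weight climbs by a constant factor between recomputations, and by at most a polynomial factor over the whole run) caps the number of recomputations at $\apxO{1/\eps}$. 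Issue (b) is subtler, and is where I expect the real work: the weights and loads must be maintained lazily so that the total update cost over all steps stays $\apxO{m/\eps^2}$ rather than growing with the total size of the selected cuts; the handle is that a step changes only the weights of edges in one cut, and only by controlled multiplicative amounts, so the updates can be amortized against the same potential that bounds the step count.

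\textbf{Output and high probability.} Finally, to obtain the claimed sparse support, apply cut sparsification once more --- now to the weighted graph with capacities $x$ produced above --- and rescale the resulting $O(n\log n/\eps^2)$-edge vector by $1/(1-\eps)$ to restore $\sum_{e\in\delta(S)} x_e \ge 2$ for all $S$; this keeps the cost within a $\epsmore$ factor of $\opt\refequation{2ecss}$. The only randomized ingredients are the sparsifiers and the approximate-min-cut calls, each correct with high probability and invoked only polynomially many times, so a union bound yields the stated high-probability guarantee, within total time $\apxO{m/\eps^2}$.
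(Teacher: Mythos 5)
First, note that the paper does not prove \reftheorem{apx-2ecss} at all: it is imported verbatim from \citet{cq-17-focs}, so there is no in-paper argument to compare against. Judged on its own terms, your proposal has the right high-level architecture (the covering/packing duality with a global minimum-cut oracle, a Garg--K\"onemann-style loop, and the observation that the real content is the implementation cost), but the two places you flag as ``where the real work is'' are exactly where the argument breaks, and neither is repaired by what you write.

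The most serious gap is your item (b). You bound the number of iterations by $\apxO{m/\eps^2}$ by charging each iteration to its bottleneck edge; this leaves \emph{zero} slack, so every iteration must cost amortized $\apxO{1}$ work. But an iteration must (in your scheme) update $w_e$ and $\ell_e$ for every edge of the selected cut, i.e.\ $\Theta(1)$ work per edge, while the potential you propose to charge against only grows by $\sum_{e \in \delta(S^*)} \Delta\ell_e/\cost{e}$, and for every non-bottleneck edge the term $\Delta\ell_e/\cost{e}$ can be arbitrarily small. So the total update work is $\sum_t \sizeof{\delta(S^*_t)}$, which for cuts of size $\Omega(n)$ over $\apxO{m/\eps^2}$ iterations is $\apxO{mn/\eps^2}$, and it is \emph{not} dominated by the potential that bounds the iteration count. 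Closing this hole is the main technical contribution of the cited work: one needs either randomized (sampled) weight updates whose expected cost is proportional to the multiplicative weight increase, together with concentration arguments, or an implicit representation of the cuts (via Karger-style tree packings) that supports batch increments; ``amortize against the same potential'' does not work as stated. Your item (a) has a related unaddressed difficulty: caching ``a short list'' of near-minimum cuts is not enough, since Karger's bound gives polynomially many such cuts and you must represent and query them implicitly; and your proposed one-time cut sparsification cannot be done ``throughout,'' because a sparsifier preserves cuts only with respect to the weights at construction time, whereas your oracle needs minimum cuts with respect to the evolving MWU lengths. (This is precisely why the present paper applies sparsification only \emph{after} an LP solution $x$ is in hand, in \reftheorem{sparsifying-2ecss}, rather than inside the solver.)
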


Surprisingly the running time to solve the LP is significantly faster than
the time to implement Christofides's heuristic. \citet{cq-17-focs}
raised the question of faster algorithms that yield
a $\frac{3}{2}$-approximation. In this paper we make progress towards the question.
Our first result shows that the LP solution can be sparsified as follows.

\begin{theorem}
  \labeltheorem{sparsifying-2ecss} There is a randomized algorithm that
  given a feasible solution $x \in \reals^{\edges}$ for
  \refequation{2ecss} and $\eps > 0$, outputs, with high probability,
  another feasible solution $x'$ such that (i) support of $x'$ is
  $O(n \log n/\eps^2)$ and (ii) the objective value of $x'$ is
  close to that of $x$, that is, $\sum_{e \in \edges}
  \cost{e} x'_e \leq \epsmore \sum_{e \in \edges} \cost{e} x_e$.
\end{theorem}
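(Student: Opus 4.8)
The plan is to regard the feasible solution $x$ as a weighting of the edges of $G$, apply a cut-sparsification routine in the style of \citet{bk-15} to it, and then rescale slightly so that the sparse output is again feasible for \refequation{2ecss}; the one ingredient beyond a black-box sparsifier will be making sure that the objective $\sum_{e\in\edges}\cost{e}x_e$ is also preserved up to a $(1+\eps)$ factor. Observe that $x$ is feasible for \refequation{2ecss} exactly when $x\ge 0$ and $x(\delta(S))\ge 2$ for every $\emptyset\neq S\subsetneq V$; in particular there are no upper bounds on the coordinates, which is what makes the rescaling below legitimate (and is one reason to work with \refequation{2ecss} rather than \refequation{subtour-elimination}). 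So $x$ defines a weighted graph $G_x$ in which every nontrivial cut has weight at least $2$, and the task is to find a sparse reweighting of $G_x$ with the same two features: every cut of weight $\ge 2$, and roughly the same $\cost{}$-cost.

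I would then run a cut sparsifier on $G_x$, obtaining with high probability (and in nearly linear time) weights $y\ge 0$ supported on $O(n\log n/\eps^2)$ edges with $(1-\eps)\,x(\delta(S))\le y(\delta(S))\le (1+\eps)\,x(\delta(S))$ for all cuts $S$ simultaneously. Since $x(\delta(S))\ge 2$, the lower estimate only gives $y(\delta(S))\ge 2(1-\eps)$, so $y$ itself need not be feasible; I set $x':=\tfrac{1}{1-\eps}\,y$, which has $x'(\delta(S))\ge x(\delta(S))\ge 2$, is therefore feasible, and has $\mathrm{supp}(x')=\mathrm{supp}(y)$ of size $O(n\log n/\eps^2)$, giving (i). Replacing $\eps$ by a constant fraction of itself at the very end converts the $(1\pm\eps)$ and $\tfrac{1}{1-\eps}$ factors into the exact bounds claimed.

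The step that does not follow from a black-box cut sparsifier is conclusion (ii): sparsifiers preserve cut values but not an arbitrary nonnegative linear functional on the edges. To handle this I would open up the importance sampling underlying the sparsifier, in which edge $e$ is kept with probability roughly $p_e=\min\{1,\Theta(\log n/\eps^2)/k_e\}$ for a suitable connectivity/strength parameter $k_e$ and is reweighted by $x_e/p_e$ when kept. Let $C:=\sum_{e\in\edges}\cost{e}x_e$ and $\lambda:=\Theta(\log n/\eps^2)$, and inflate the probabilities to $p'_e:=\min\{1,\ p_e+\lambda\,\cost{e}x_e/C\}$; equivalently, keep every ``cost-heavy'' edge (one carrying at least a $\Theta(\eps^2/\log n)$ fraction of $C$, of which there are $O(\log n/\eps^2)$) deterministically. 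Since $\sum_e \cost{e}x_e/C=1$, this adds only $O(\lambda)$ to the expected sample count, so the support bound is unaffected, and since $p'_e\ge p_e$ the sparsifier's cut-concentration analysis is undisturbed. Now $\mathbb{E}\big[\sum_e\cost{e}y_e\big]=\sum_e\cost{e}x_e=C$, and each kept edge contributes $\cost{e}x_e/p'_e\le C/\lambda=O(\eps^2 C/\log n)$, so a Bernstein-type concentration bound for the sparsifier's sampling gives $\sum_e\cost{e}y_e\in(1\pm\eps)C$ with high probability. Combining with the rescaling, $\sum_e\cost{e}x'_e=\tfrac{1}{1-\eps}\sum_e\cost{e}y_e\le\tfrac{1+\eps}{1-\eps}\sum_e\cost{e}x_e$, and a union bound over the sparsifier's and the cost concentration's failure events finishes the argument.

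I expect this cost-preservation step to be the main obstacle: the sampling probabilities must be chosen so that heavy edges survive almost surely while the total number of sampled edges stays $O(n\log n/\eps^2)$ and the sparsifier's original cut analysis still goes through. A secondary point to check is whether the version of \refequation{2ecss} in use carries box constraints $x_e\le 1$; if so the $\tfrac{1}{1-\eps}$ rescaling is not immediately valid and one must instead cap the rescaled weights at $1$ and argue this does not reintroduce an infeasible cut, or route the argument through the equivalence between \refequation{2ecss} and \refequation{subtour-elimination}.
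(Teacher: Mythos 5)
Your proposal is correct and follows essentially the same route as the paper: run the Bencz\'ur--Karger importance sampler on the weighted graph $(G,x)$, but boost each edge's sampling probability by a term proportional to its share $\cost{e}x_e/\sum_{e'}\cost{e'}x_{e'}$ of the objective (the paper takes $r_e=\max\{p_e,q_e\}$ where your version adds them, which is the same up to constants), so that cut preservation, a Chernoff bound on the cost, and the $O(n\log n/\eps^2)$ support bound all hold simultaneously, after which a $(1+O(\eps))$ rescaling restores feasibility for \refequation{2ecss}. Your closing worry about box constraints is moot, as you yourself note earlier: \refequation{2ecss} has no upper bounds on the coordinates, which is precisely why the paper works with it rather than with \refequation{subtour-elimination}.
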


Combining the two preceding theorems yields a randomized algorithm
that, in nearly linear time, sparsifies a given graph $G$ on $n$ nodes
and $m$ edges to a subgraph $H$ with $O(n \log n/\eps^2)$ edges such
that the LP value on $H$ is within a $(1+\eps)$-factor of the LP value
on $G$. This sparsification can generically help any approximation
algorithm that relies on the LP solution either directly or
indirectly.  We use cut sparsification techniques for the above. The
only novelty is that, unlike most applications that we are aware of, we
also need to preserve the \emph{cost} of the sparsifier; we observe
that importance sampling which underlies a class of sparsifiers
\citep{bk-15,fhhp-11} is suitable for this purpose. Stronger
sparsification results such as the one of \cite{bss-12} do not appear
to preserve the cost.

We utilize the two preceding theorems and the analysis by
\cite{wolsey-80} of Christofides' heuristic with respect to the LP, to
obtain a fast $\parof{\frac{3}{2} + \eps}$-approximation.

\begin{theorem}
  \labeltheorem{fast-christofides-intro}
  There is a randomized algorithm that in
  \begin{math}
    \apxO{\frac{m}{\eps^2} + \frac{n^{1.5}}{\eps^3}}
  \end{math}
  time, with probability at least $1 - \frac{1}{\poly{m}}$,
  returns a tour of
  cost at most
  \begin{math}
    \epsmore \frac{3}{2} \opt \refequation{subtour-elimination}.
  \end{math}
\end{theorem}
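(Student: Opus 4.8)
The plan is to run Christofides' heuristic on a sparsified graph, accelerate the matching step, and charge the cost to the LP via the analysis of \cite{wolsey-80}.

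First I would invoke \reftheorem{apx-2ecss} on $G$ to obtain, in $\apxO{m/\eps^2}$ time, a point $x \ge 0$ feasible for \refequation{2ecss} with $c \cdot x \le \epsmore \opt\refequation{2ecss}$; since the optima of \refequation{2ecss} and \refequation{subtour-elimination} coincide for metric instances, $c \cdot x \le \epsmore \opt\refequation{subtour-elimination}$. I would then apply \reftheorem{sparsifying-2ecss} to $x$, obtaining with high probability a feasible $x'$ for \refequation{2ecss} whose support is a subgraph $H$ with $O(n \log n/\eps^2)$ edges and $c \cdot x' \le \epsmore\, c \cdot x$; by the remark following that theorem, these two steps together cost $\apxO{m/\eps^2}$ time. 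Separately, I would compute a minimum spanning tree $T$ of $G$ in $\apxO{m}$ time; the standard polytope argument — a degree-two optimal solution of \refequation{subtour-elimination}, scaled by $(n-1)/n$, lies in the spanning-tree polytope — gives $c(T) \le \opt\refequation{subtour-elimination}$. Let $O$ be the set of odd-degree vertices of $T$; it has even cardinality and size at most $n$.

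The crux is to produce a cheap $O$-join without paying for the full metric. Here I would use the Edmonds--Johnson description of the $T$-join dominant: since $x' \ge 0$ and $\frac{1}{2} x'(\delta(S)) \ge 1$ for every cut, in particular for every $S$ with $|S \cap O|$ odd, the point $\frac{1}{2} x'$ lies in the dominant of the $O$-join polytope of $H$, so the minimum-cost $O$-join of $H$ has cost at most $\frac{1}{2} c \cdot x' \le \frac{1}{2}(1+\eps)^2 \opt\refequation{subtour-elimination}$ — this is the matching bound of \cite{wolsey-80}, localized to the sparse graph $H$. I would then compute a $(1+\eps)$-approximate minimum-cost $O$-join $J$ in $H$ via the scaling matching algorithm of \cite{gt-91}: since $H$ has only $O(n \log n/\eps^2)$ edges, and after rescaling and truncating the costs so that the max-to-min weight ratio is $\poly{n/\eps}$ (a further $(1+\eps)$ loss), this runs in $\apxO{(n/\eps^2)^{1.5}} = \apxO{n^{1.5}/\eps^3}$ time and yields $c(J) \le \frac{1}{2}(1+\eps)^3 \opt\refequation{subtour-elimination}$. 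Finally $T + J$ is a connected multigraph with all degrees even, and all its edges lie in $G$; I would shortcut its Eulerian circuit using the triangle inequality in the metric completion of $G$ to get a Hamiltonian tour of cost at most $c(T) + c(J) \le \bigl(1 + \frac{1}{2}(1+\eps)^3\bigr)\opt\refequation{subtour-elimination}$, which is $\epsmore \frac{3}{2}\opt\refequation{subtour-elimination}$ after rescaling $\eps$ by a constant factor at the outset. The only randomized steps are the two LP steps, which succeed with probability $1 - 1/\poly{m}$; a union bound together with summing the running times completes the argument.

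I expect the matching step to be the main obstacle. The textbook reduction of a minimum-cost $O$-join to a minimum-cost perfect matching on $O$ in the explicit shortest-path metric of $H$ would first require all-pairs distances among $O$, costing $\apxO{|O| \cdot |E(H)|} = \apxO{n^2/\eps^2}$ and already exceeding the target; so one must run the scaling algorithm against the \emph{implicit} metric of $H$, realizing each of its $O(\sqrt{n})$ augmentation phases by a shortest-path search in $H$ rather than a table lookup — this is precisely the $\apxO{m^{1.5}}$ regime of \cite{gt-91} for implicitly specified metrics, now instantiated with $H$ in place of the input graph. One also has to verify that the cost truncation used to bound the weight ratio perturbs the $O$-join cost by only a $(1+\eps)$ factor, and that the $T$-join polyhedron indeed reproduces the $\frac{1}{2}\opt$ bound of \cite{wolsey-80} when applied to $\frac{1}{2} x'$ restricted to $H$.
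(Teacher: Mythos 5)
Your overall pipeline is exactly the paper's: solve \refequation{2ecss} approximately via \reftheorem{apx-2ecss}, sparsify the solution by \reftheorem{sparsifying-2ecss} to a subgraph $H$ with $\apxO{n/\eps^2}$ edges, take an MST of $G$, use Wolsey's observation that half the (sparsified) LP point lies in the dominant of the $O$-join polytope to charge the join to $\frac{1}{2}\opt\refequation{subtour-elimination}$, and shortcut the Euler tour of the union. The cost accounting, the $1-1/\poly{m}$ success probability, and the $\apxO{m/\eps^2} + \apxO{n^{1.5}/\eps^3}$ time budget all match the paper's proof.

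The one place you diverge -- and the step you yourself flag as the obstacle -- is how the approximate minimum $O$-join in $H$ is actually computed within $\apxO{n^{1.5}/\eps^3}$ time, and there your justification has a gap. You propose to run the scaling matching algorithm of \citet{gt-91} ``against the implicit metric of $H$,'' realizing each augmentation phase by shortest-path searches, together with an ad hoc cost truncation to control the weight ratio; but \citet{gt-91} do not supply such an implicit-metric variant, and the claim that each phase can be implemented by a shortest-path search in $H$ is asserted rather than proved. The paper avoids this entirely: it uses the linear-size reduction of \citet{bkvz-99} (\reffact{t-join-gadget}), which converts the $T$-join problem on $H$ into a minimum-cost perfect matching problem on an auxiliary graph with $\bigO{\sizeof{\edges(H)}}$ vertices and edges, and then runs Gabow--Tarjan as a black box; the $(1+\eps)$-approximate version already has $\log\reps$ rather than $\log W$ dependence, so no truncation argument is needed (\reffact{sparse-t-join}). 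With that substitution your argument goes through verbatim and coincides with the paper's; without it (or some equivalent reduction), the $\apxO{n^{1.5}/\eps^3}$ bound for the join step is not established.
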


\section{Preliminaries}

\begin{table}
  \centering \renewcommand{\arraystretch}{3} %
  % \fbox{
  \normalsize
  \begin{tabular}{| c | c | c |}
    \hline                      %
    Metric        %
    & $\frac{3}{2}$-APX         %
    & $\epsmore\frac{3}{2}$-APX
    \\
    \hline                           %
    \multirow{2}{4em}{\centering Shortest paths} %
    &                                %
      $\apxO{m n + n^{2.5}}$ & \multirow{2}{*}{$\apxO{\frac{m}{\eps^2} +
                               \frac{n^{1.5}}{\eps^3}}$} \\
    \cline{2-2}                 %
    & $\apxO{m^2}$ & \\
    \hline
    Explicit
    &
      \begin{math}
        \apxO{n^{2.5}}
      \end{math}
    &
      \begin{math}
        \apxO{\frac{n^2}{\eps^2} +
          \frac{n^{1.5}}{\eps^3}}
      \end{math}
    \\
    \hline
  \end{tabular}

  \medskip

  % \fbox{
  \begin{minipage}{206pt}
      \caption{Running times for \metrictsp.  The running times for
        computing an $\epsmore \frac{3}{2}$ are new and the algorithms
        are randomized and work with high probability.  }
    \end{minipage}
  % }
\end{table}

\subsection{Christofides' $\frac{3}{2}$-approximation algorithm}

\labelsection{christofides}

In this section, we review the approximation algorithm of
\citet{christofides-76} for \metrictsp.

\begin{definition}
  Given an even set of vertices $T \subseteq \vertices$, a
  \emph{$T$-join} is a subgraph $H$ of $\graph$ for which $T$ is the
  set of vertices with odd degree. The cost of a $T$-join is the sum
  cost of all the edges in $H$,
  \begin{math}
    \sumcost{H} = \sum_{e \in \edges{H}} \cost{e}.
  \end{math}
\end{definition}
$T$-joins and polynomial-time algorithms for computing minimum
$T$-joins are discussed later in \refsection{t-joins}.

\begin{center}
  \begin{minipage}{.5\paperwidth}
    \ttfamily
    \noindent{\underline{\citet{christofides-76}}}
    \begin{enumerate}
    \item Compute the minimum spanning tree $M$ of $\graph$.
    \item Compute the minimum cost $T$-join $H$ of $\graph$, where $T$ is
      the set of odd-degree vertices in $M$.
    \item The multiset $M + H$ is an Eulerian graph. Return an Eulerian
      tour of $M + H$.
    \end{enumerate}
  \end{minipage}
\end{center}

\begin{theorem}[{\citealp{christofides-76}}]
  Christofides' algorithm returns a tour of cost at most $\frac{3}{2}$
  times the optimal value.
\end{theorem}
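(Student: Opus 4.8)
The plan is to bound the cost of the Eulerian multigraph $M + H$ by $\frac{3}{2}$ times the optimal tour cost, and then to observe that an Eulerian tour of $M+H$ can be shortcut to a Hamiltonian cycle of no larger cost by the triangle inequality. Write $\opt$ for the cost of a minimum-cost Hamiltonian cycle in $\graph$. Since the multiset union gives $\sumcost{M+H} = \sumcost{M} + \sumcost{H}$, and the returned tour has cost at most $\sumcost{M+H}$ after shortcutting, it suffices to prove the two bounds $\sumcost{M} \le \opt$ and $\sumcost{H} \le \frac{1}{2}\opt$ separately and then add them.

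The spanning-tree bound is the easy step. Deleting any single edge from an optimal Hamiltonian cycle leaves a spanning path, which is in particular a spanning tree of $\graph$; its cost is therefore at least $\sumcost{M}$. Since the edge costs are positive, this already gives $\sumcost{M} \le \opt$.

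The main obstacle is the $T$-join bound $\sumcost{H} \le \frac{1}{2}\opt$. The key point is that $T$, the set of odd-degree vertices of $M$, has even cardinality by the handshaking lemma, and that any perfect matching on $T$ is itself a $T$-join (its endpoints have odd degree $1$, all other vertices degree $0$). Hence it suffices to exhibit one perfect matching on $T$ of cost at most $\frac{1}{2}\opt$. To do so, I would fix an optimal tour and list the vertices of $T$ in the cyclic order $t_1, t_2, \dots, t_{2k}$ in which they are encountered along it. Applying the triangle inequality edge by edge, the closed cycle $t_1 t_2 \cdots t_{2k} t_1$ on $T$ costs at most $\opt$, since each direct hop $t_i t_{i+1}$ is no more expensive than the tour segment it shortcuts. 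This cycle decomposes into two edge-disjoint perfect matchings on $T$, namely the odd-position edges $\{t_1 t_2, t_3 t_4, \dots, t_{2k-1}t_{2k}\}$ and the even-position edges $\{t_2 t_3, \dots, t_{2k} t_1\}$, whose costs sum to at most $\opt$; the cheaper of the two therefore costs at most $\frac{1}{2}\opt$. Because the minimum $T$-join cost cannot exceed the cost of any particular perfect matching on $T$, we conclude $\sumcost{H} \le \frac{1}{2}\opt$.

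Combining the two bounds yields $\sumcost{M+H} = \sumcost{M} + \sumcost{H} \le \opt + \frac{1}{2}\opt = \frac{3}{2}\opt$, and shortcutting the Eulerian tour of $M+H$ produces a Hamiltonian cycle of cost at most $\frac{3}{2}\opt$, as claimed. The single point deserving care throughout is the shortcutting: both in passing from the optimal tour to a cycle on $T$ and in passing from the Eulerian tour to a Hamiltonian one, the argument relies on $\graph$ being a complete metric graph, so that bypassing intermediate vertices never increases cost.
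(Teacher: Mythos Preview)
Your proof is correct and follows essentially the same approach as the paper's proof sketch: bound the MST by $\opt$ via the observation that the optimal tour contains a spanning tree, and bound the $T$-join by $\frac{1}{2}\opt$ by splitting the optimal tour at the $T$-vertices into two alternating collections. The only cosmetic difference is that the paper keeps the alternating tour segments as paths (each already a $T$-join) rather than shortcutting them to matching edges via the triangle inequality as you do; neither choice affects the bound or the overall argument.
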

\begin{proof}[Proof sketch]
  As the minimum cost connected subgraph of $\graph$ the minimum
  spanning tree has cost at most $\parof{1 - 1/n}\opt$. The minimum cost $T$-join has
  cost at most $1/2$ times $\opt$, which can be seen as follows. Break
  the optimal tour into paths between vertices in $T$. There is an
  even number of these paths, and taking every other path induces a
  $T$-join. That is, the optimal tour can be divided into 2
  $T$-joins. The smaller of the two $T$-joins has cost at most half of
  the optimal tour.
\end{proof}

\iffalse
\begin{remark}
  Despite a large body of research and breakthroughs in other variants
  and special cases of TSP (see \citep{vygen-12} for a survey of
  recent developments), Christofides' bound of $\frac{3}{2}$ remains
  the best approximation factor for \metrictsp.
\end{remark}
\fi

\subsection{LP Relaxations for Metric TSP}

\labelsection{lp}

A standard LP for TSP is the following \emph{subtour elimination LP.}
\begin{align*}
  \begin{aligned}
    \text{minimize }                    %
    & \sum_{e \in \edges} \cost{e} y_e %
    \text{ over } y \in \reals^{\edges}               %
    \\
    \text{ s.t.\ }                 %
    & \sum_{e \in \delta(v)} y_e = 2 \text{ for all } v \in \vertices,\\
    & \sum_{e \in \delta(U)} y_e    %
    \geq                          %
    2 \text{ for all } \emptyset
    \subsetneq U \subsetneq \vertices, \\
    \text{and }                   %
    &                 %
    y_e \in [0,1] \text{ for all } e \in \edges.
  \end{aligned}
      \labelthisequation[\textsf{SE}]{subtour-elimination}
\end{align*}
Here $\delta(S)$ denotes the set of edges crossing a set $S \subset
V$.  The first set of constraints require each vertex to be incident
to exactly two edges (in the integral setting), and are referred to as
degree constraints. The second constraint forces two-edge connectivity.
The LP provides a lower bound for TSP that coincides with the
lower bound of \citet{hk-70}. The worst-case integrality gap of
this LP is conjectured to be $\frac{4}{3}$ and is a major open problem in
approximation algorithms. \cite{wolsey-80} showed that the integerality gap
is at most $\frac{3}{2}$ by analyzing Christofides's algorithm via
the LP.

\begin{fact}[\citet{wolsey-80}]
  Christofides' algorithm returns a tour of cost $\leq \frac{3}{2}$
  times the cost of $\opt\refequation{subtour-elimination}$. In
  particular, the integrality gap of \refequation{subtour-elimination}
  is at most $\frac{3}{2}$.
\end{fact}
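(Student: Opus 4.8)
The plan is to fix an optimal solution $y^*$ of \refequation{subtour-elimination}, so that $\sum_{e \in \edges} \cost{e} y^*_e = \opt\refequation{subtour-elimination}$, and to charge the two objects that Christofides' algorithm constructs---the minimum spanning tree $M$ and the minimum $T$-join $H$ (with $T$ the odd-degree vertices of $M$)---separately against $y^*$. Concretely, I would show that the cost of $M$ is at most $\opt\refequation{subtour-elimination}$ and that the cost of the minimum $T$-join is at most $\frac{1}{2}\opt\refequation{subtour-elimination}$. Adding these two bounds and observing that shortcutting the Eulerian tour of $M+H$ does not increase cost then gives a Hamiltonian tour of cost at most $\frac{3}{2}\opt\refequation{subtour-elimination}$, and since the algorithm's output is an integral tour this simultaneously bounds the integrality gap. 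Throughout, write $n = |\vertices|$ and let $E(U)$ denote the edges with both endpoints in $U$.

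For the spanning tree I would use Edmonds' description of the spanning tree polytope, $\{x \ge 0 : \sum_{e \in \edges} x_e = n-1,\ \sum_{e \in E(U)} x_e \le |U|-1 \text{ for all } \emptyset \ne U \subseteq \vertices\}$, over which the minimum of $\sum_{e} \cost{e} x_e$ equals the cost of $M$; hence it suffices to exhibit a feasible point. The candidate is $z = \frac{n-1}{n}\, y^*$. Summing the degree constraints $\sum_{e \in \delta(v)} y^*_e = 2$ over all $v$ gives $\sum_{e} y^*_e = n$, so $\sum_{e} z_e = n-1$. Moreover, from the identity $\sum_{e \in E(U)} y^*_e = |U| - \frac{1}{2}\sum_{e \in \delta(U)} y^*_e$ together with the cut constraint $\sum_{e \in \delta(U)} y^*_e \ge 2$ we get $\sum_{e \in E(U)} y^*_e \le |U|-1$, and therefore $\sum_{e \in E(U)} z_e \le |U|-1$. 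Thus $z$ is feasible, and the cost of $M$ is at most $\sum_{e} \cost{e} z_e = \frac{n-1}{n}\opt\refequation{subtour-elimination} \le \opt\refequation{subtour-elimination}$.

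For the $T$-join the crucial ingredient is the Edmonds--Johnson theorem, that the dominant of the convex hull of $T$-joins is exactly $\{x \ge 0 : \sum_{e \in \delta(U)} x_e \ge 1 \text{ for every } U \text{ with } |U \cap T| \text{ odd}\}$; consequently the minimum $T$-join cost equals the minimum of $\sum_{e} \cost{e} x_e$ over this set. I claim that $\frac{1}{2} y^*$ lies in it: for \emph{every} cut, and in particular for every cut separating $T$ into two odd parts, the subtour constraint gives $\sum_{e \in \delta(U)} y^*_e \ge 2$, hence $\sum_{e \in \delta(U)} \frac{1}{2} y^*_e \ge 1$. Therefore the minimum $T$-join has cost at most $\sum_{e} \cost{e}\,\frac{1}{2} y^*_e = \frac{1}{2}\opt\refequation{subtour-elimination}$. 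Notably this argument uses nothing about $T$ beyond its being even, so it applies to the particular $T$ produced from $M$.

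Combining the two estimates, the Eulerian multigraph $M+H$ has total cost at most $\opt\refequation{subtour-elimination} + \frac{1}{2}\opt\refequation{subtour-elimination} = \frac{3}{2}\opt\refequation{subtour-elimination}$. An Euler tour traverses each edge of $M+H$ exactly once, and repeatedly shortcutting over already-visited vertices yields a Hamiltonian cycle whose cost, by the triangle inequality of the metric, does not exceed that of $M+H$, completing the bound. I expect the main obstacle to be the $T$-join step: the spanning tree estimate and the shortcutting are routine, whereas the assertion that $\frac{1}{2} y^*$ is feasible for the minimum $T$-join rests essentially on the polyhedral characterization of $T$-joins by their odd cuts, which is the one nontrivial external fact the argument needs.
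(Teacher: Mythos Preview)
Your proposal is correct and is precisely the standard Wolsey argument. The paper does not actually supply a proof of this Fact (it is simply cited), but the two ingredients you isolate---that the $\frac{n-1}{n}$-scaled LP optimum lies in the spanning tree polytope, and that half of any feasible LP point lies in the $T$-join dominant for every even $T$---are exactly what the paper records elsewhere (the latter as an Observation attributed to Wolsey) and then invokes in the proof of its main theorem.
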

To apply the lower bound to an
implicit instance of \metrictsp defined by $\graph$, one needs to
apply it to the metric completion of $\graph$. Instead one can consider
a simpler LP obtained by dropping the degree constraints and the
upper bound constraints in \refequation{subtour-elimination}.
This leads to the following LP relaxation for the \emph{2-edge connected
  spanning subgraph problem (allowing multiplicities)}.
\begin{align*}
  \begin{aligned}
    \text{minimize } %
    & \sum_{e \in \edges} \cost{e} y_e %
    \text{ over } y \in \reals^{\edges} %
    \\
    \text{ s.t.\ } %
    & \sum_{e \in \delta(U)} y_e %
    \geq %
    2 \text{ for all } \emptyset
    \subsetneq U \subsetneq \vertices \\
    \text{and } %
    & %
    y_e \geq 0 \text{ for all } e \in \edges.
  \end{aligned}
      \labelthisequation[\textsf{2ECSS}]{2ecss}
\end{align*}
\begin{fact}[{Cunningham [via \citealp{mmp-90}], \citealp{gb-93}}]
  \labelfact{subtour-elimination=2ecss} %
  The optimum value of subtour elimination LP
  \refequation{subtour-elimination} for the metric completion of
  $\graph$ coincides with the optimum value of the 2-edge connected
  spanning subgraph LP \refequation{2ecss}.
\end{fact}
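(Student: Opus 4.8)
The plan is to compare the two LPs through the intermediate relaxation obtained by applying \refequation{2ecss} to the metric completion itself. Let $K$ be the complete graph on $V$ equipped with the shortest-path metric $d$ of $G$, so that \refequation{subtour-elimination} is a relaxation on $K$. I will establish two equalities: (i) the optimum of \refequation{2ecss} on $G$ equals the optimum of \refequation{2ecss} on $K$, and (ii) the optimum of \refequation{2ecss} on $K$ equals the optimum of \refequation{subtour-elimination}; chaining them proves the claim. Throughout assume $n \ge 3$, the cases $n \le 2$ being trivial.

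For (i) I would prove both inequalities by transporting solutions between the two edge sets. Given an optimal solution $w$ on $K$, route each $w_{uv}$ along a fixed shortest $u$--$v$ path $P_{uv}$ in $G$, setting $z_e = \sum_{\{u,v\}:\, e \in P_{uv}} w_{uv}$. Since each path has $c$-cost exactly $d(u,v)$, the $c$-cost of $z$ equals the $d$-cost of $w$; and for any $\emptyset \subsetneq U \subsetneq V$, every pair $\{u,v\}$ separated by $U$ has its path cross $\delta_G(U)$ at least once, so $\sum_{e \in \delta_G(U)} z_e \ge \sum_{e \in \delta_K(U)} w_e \ge 2$, giving feasibility. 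Conversely, any solution $z$ on $G$ becomes a solution on $K$ after extending by zero, with the same cut values and with $d$-cost at most its $c$-cost because $d_e \le c_e$ on $E$. Hence the two optima coincide.

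Equality (ii) is the parsimonious property. Since \refequation{subtour-elimination} carries all constraints of \refequation{2ecss} on $K$ together with the degree equalities and the bounds $y_e \le 1$, its optimum is at least that of \refequation{2ecss} on $K$, and the content is the reverse inequality. Starting from an optimal \refequation{2ecss} solution $w$ on $K$, I would repair each vertex $v$ with $\sum_{e \in \delta(v)} w_e > 2$ by \emph{shortcutting}: transferring weight $\epsilon$ from a pair of incident edges $(v,a),(v,b)$ onto the edge $(a,b)$, which by the triangle inequality $d(a,b) \le d(v,a)+d(v,b)$ never increases the cost. The difficulty is that such a move lowers by $2\epsilon$ any tight cut with $a,b$ on the side opposite $v$; to choose admissible pairs that avoid this I would invoke the weighted splitting-off theorem of Lovász and Mader, which guarantees a reweighting that brings $\deg_w(v)$ down to exactly $2$ while preserving $\lambda(x,y) \ge 2$ for every pair $x,y \ne v$ --- equivalently, keeping every cut that separates two vertices other than $v$ at value $\ge 2$. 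Stopping at degree $2$ (rather than $0$) leaves the singleton cut $\delta(\{v\})=2$ feasible, and applying this at every over-saturated vertex yields a point $y$ with all degrees equal to $2$, all cuts $\ge 2$, and cost at most that of $w$.

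It remains to recover the upper bounds, which come for free: once all degrees equal $2$ and all cuts are $\ge 2$, the two-element cut $\{u,v\}$ satisfies $\sum_{e \in \delta(\{u,v\})} y_e = \deg(u) + \deg(v) - 2 y_{uv} = 4 - 2 y_{uv} \ge 2$, forcing $y_{uv} \le 1$. Thus the degree-repaired point is feasible for \refequation{subtour-elimination}, establishing (ii). The main obstacle is precisely the degree-reduction step: showing that an admissible, cut-preserving shortcut always exists while the degree exceeds $2$. This is where the metric hypothesis is indispensable --- the triangle inequality controls the cost and the splitting-off theorem controls feasibility --- and it is the technical heart of the parsimonious property; LP duality gives an alternative route to the same equality.
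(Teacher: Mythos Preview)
The paper does not include a proof of this fact; it is quoted with attribution to Cunningham (via Monma--Munson--Pulleyblank) and to Goemans--Bertsimas, and then used as a black box. So there is nothing in the paper to compare your argument against, but your outline is a correct reconstruction of the standard proof.

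The decomposition into (i) and (ii) is the natural one. Part (i) --- transporting between $G$ and its metric completion $K$ by routing along shortest paths in one direction and zero-extending in the other --- is routine and correct as you state it. Part (ii) is precisely the parsimonious property, and your shortcut-plus-splitting-off argument is the classical route (this is essentially how Goemans and Bertsimas argue): the triangle inequality handles cost, splitting-off handles cut feasibility, and the upper bounds $y_e \le 1$ drop out of the two-vertex cut inequality once all degrees equal $2$. The only place that deserves more care in a full write-up is the appeal to splitting-off in the \emph{fractional} setting; the cleanest justification is that whenever $\deg_w(v) > 2$ in a graph whose every cut has weight at least $2$, an uncrossing argument on the tight cuts through $v$ produces an admissible pair $(a,b)$ and a strictly positive $\epsilon$, so the degree can be driven continuously down to $2$ (equivalently, clear denominators and apply the integral Lov\'asz theorem at an Eulerian vertex). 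You already flag this as the technical heart and note the LP-duality alternative, which is fair.
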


\subsection{Perfect matchings and $T$-joins}
\labelsection{t-joins}

Before addressing $T$-joins in general, we first review the following
special case where $T = V$.
\begin{definition}
  A \emph{matching} in $\graph$ is a set of edges $M \subseteq E$ such
  that each vertex is incident to at most one edge in $M$. A
  \emph{perfect matching} is a set of edges $M \subseteq E$ such that
  each vertex is incident to exactly one edge in $M$. The cost of a
  matching $M$ is $\sum_{e \in E} \cost{e}$.
\end{definition}

\begin{problem}[Minimum cost perfect matching]
  Assuming $\graph$ contains a perfect matching, compute the minimum
  cost perfect matching.
\end{problem}

The following running times are known for computing minimum cost
perfect matchings.

\begin{fact}[{\citealp{gt-91}}]
  \labelfact{gt-91} If the edge costs are integers between $-W$ and
  $W$, the minimum cost perfect matching can be computed in
  $\apxO{m \sqrt{n} \log W}$ time. For general edge weights and any
  $\eps > 0$, an $\epsmore$-minimum perfect matching can be computed
  in $\apxO{m \sqrt{n} \log \reps}$ time.
\end{fact}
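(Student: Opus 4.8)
The plan is to establish the integral-weight bound via the cost-scaling, primal--dual augmenting-path framework, and then derive the approximation bound for general weights by a rounding reduction. Throughout we work with the LP for minimum-cost perfect matching and its dual, maintaining a matching $M$ together with dual variables $y_v$ on the vertices and $z_B$ on the odd vertex sets (blossoms) of a laminar family $\Omega$. The invariant we preserve is approximate complementary slackness: every edge $e=(u,v)$ has reduced cost $c_e - y_u - y_v - \sum_{B \in \Omega : e \subseteq B} z_B$ lying in a fixed granularity window, with matched edges tight and the dual admissible. An edge is \emph{eligible} when its reduced cost is $0$, and all augmentation is carried out along eligible edges, so that restoring a perfect matching on the eligible subgraph automatically preserves near-optimality.

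First I would organize the computation into $O(\log W)$ \emph{scales}, processing the bits of the integer costs (of magnitude at most $W$) from most to least significant. Passing from one scale to the next doubles the current duals and adds the next cost bit; this perturbs feasibility by only an additive constant per edge, creating $O(n)$ unmatched vertices and bounded dual error to repair. Within a scale I would repair the matching by a Hopcroft--Karp-style phased search: in each phase compute, in the eligible subgraph with blossoms contracted, a maximal set of vertex-disjoint shortest augmenting paths and augment along all of them simultaneously, interleaving Hungarian dual adjustments to create new eligible edges. The standard shortest-augmenting-path counting argument shows that after $O(\sqrt{n})$ phases the shortest augmenting path has length exceeding $\sqrt{n}$, whence at most $O(\sqrt{n})$ further augmentations remain; thus $O(\sqrt{n})$ phases suffice per scale. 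Each phase is a single graph search plus blossom bookkeeping, i.e.\ $\apxO{m}$ time, giving $\apxO{m\sqrt{n}}$ per scale and $\apxO{m\sqrt{n}\log W}$ overall.

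For general nonnegative weights I would reduce to the integral case by rounding. Guessing a value $g$ with $g \le \opt \le 2g$ (by a geometric search costing only an $\apxO{1}$ factor), I discard every edge of cost exceeding $2g$, since no matching of cost at most $2g$ can use such an edge, and round each surviving cost up to the nearest multiple of $\delta = \eps g / n$. The rounded costs, divided by $\delta$, are integers bounded by $2n/\eps$, so $\log W = O(\log(n/\eps)) = \apxO{\log(1/\eps)}$ and the scaling algorithm runs in $\apxO{m\sqrt{n}\log(1/\eps)}$ time. Rounding inflates the cost of any perfect matching by at most $(n/2)\delta = \eps g / 2 \le \eps \opt / 2$, so an exact optimum of the rounded instance is an $\epsmore$-approximate perfect matching of the original, as claimed.

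The main obstacle is the blossom machinery in the non-bipartite setting. In bipartite graphs the eligible-subgraph search and the $O(\sqrt{n})$-phase analysis are routine, but for general graphs one must maintain the laminar family $\Omega$ of blossoms together with their duals $z_B$, shrink newly discovered odd cycles during the search, and expand blossoms when their duals reach the window boundary --- all while keeping the per-phase cost at $\apxO{m}$ and ensuring that the dual updates across scales do not accumulate error. Verifying that the approximate complementary-slackness window is preserved under doubling, bit-addition, contraction, and expansion, and that these operations admit the required near-linear-time data structures, is the delicate part of the argument.
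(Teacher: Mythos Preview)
The paper does not actually prove this statement: it is recorded as a \emph{Fact} attributed to \citet{gt-91} and is used as a black box, with no accompanying argument. So there is no ``paper's own proof'' to compare against; your sketch goes well beyond what the paper itself supplies.

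That said, your outline is a faithful high-level summary of the Gabow--Tarjan cost-scaling algorithm: $O(\log W)$ scales, each repaired by $O(\sqrt{n})$ Hopcroft--Karp-style phases of cost $\apxO{m}$, with blossom shrinking/expansion and approximate complementary slackness maintained across scales. The rounding reduction you give for the $\epsmore$-approximation with general nonnegative weights is also the standard one. Two minor caveats: the geometric search for $g$ with $g \le \opt \le 2g$ needs a lower bound on $\opt$ to terminate (e.g., handle the zero-cost case separately, or bound $\opt$ between the lightest edge and $n$ times the heaviest), and the statement in the paper says ``general edge weights'' while a multiplicative $\epsmore$ guarantee only makes sense for nonnegative costs, as you correctly assume. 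Neither affects the soundness of your sketch.
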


%\subsection{$T$-joins via perfect matchings}

%\labelsection{t-joins}

Clearly, computing the minimum weight perfect matching reduces to
computing the minimum cost $T$-join for $T = V$. Conversely, we have
the following.

\begin{fact}[{\citealp{edmonds-65}}]
  \labelfact{edmonds-65}
  If all edge weights are non-negative, then the minimum cost $T$-join
  equals the minimum weight perfect matching on the clique with vertex
  set $T$, where the weight of an edge $(s,t) \in T \times T$ is the
  length of the shortest path from $s$ to $t$ in $\graph$.
\end{fact}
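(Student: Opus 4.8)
The plan is to prove the claimed equality by establishing the two inequalities separately between the minimum $T$-join cost and the minimum-weight perfect matching on the clique on $T$, whose edge $(s,t)$ has weight $d(s,t)$, the shortest-path distance from $s$ to $t$ in $\graph$. I may assume $\graph$ is connected (otherwise I argue component by component), so that both a $T$-join and a perfect matching on $T$ exist and are finite; recall $|T|$ is even by the handshake lemma.

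For the direction ``minimum $T$-join $\le$ minimum matching'', I would start from an optimal perfect matching $M$ on the clique and, for each matched pair $(s,t)\in M$, fix a shortest $s$--$t$ path $P_{s,t}$ in $\graph$. Let $J$ be the symmetric difference of the edge sets of these paths. Using the identity $\deg_{A\triangle B}(v)\equiv \deg_A(v)+\deg_B(v)\pmod 2$, the parity of the degree of a vertex $v$ in $J$ equals the number of paths $P_{s,t}$ that have $v$ as an endpoint, taken modulo $2$. Since $M$ is a perfect matching on $T$, this count is $1$ for every $v\in T$ and $0$ for every $v\notin T$, so the odd-degree set of $J$ is exactly $T$ and $J$ is a $T$-join. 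By nonnegativity of the costs, the edges that cancel in the symmetric difference only decrease the total cost, so $\sumcost{J}\le \sum_{(s,t)\in M}\sumcost{P_{s,t}}=\sum_{(s,t)\in M}d(s,t)$, which is the matching value; hence the minimum $T$-join is no larger.

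For the reverse direction, I would take a minimum $T$-join $H$ and extract a matching of no greater cost. The crux is a parity decomposition of $H$: its edge set partitions into edge-disjoint simple paths and cycles such that the endpoints of the paths are exactly the vertices of $T$, each appearing as an endpoint of precisely one path. To prove this I would argue greedily: while $H$ still has a vertex of odd degree, its connected component contains a second odd-degree vertex, and I remove the edges of some path joining them; deleting such a path flips the degree parity only at its two endpoints, so each step removes exactly two vertices from the odd-degree set and no vertex can re-enter it. When no odd-degree vertices remain, every vertex of the residual graph has even degree, so the residual edges decompose into edge-disjoint cycles. The paths of this decomposition then pair up $T$ into a perfect matching $M'$; since each path $P_i$ with endpoints $s_i,t_i$ obeys $d(s_i,t_i)\le \sumcost{P_i}$ and the paths are edge-disjoint with the leftover cycles having nonnegative cost, I would conclude $\sum_i d(s_i,t_i)\le \sum_i \sumcost{P_i}\le \sumcost{H}$, so the minimum matching is at most the minimum $T$-join. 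Combining the two inequalities gives equality.

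The main obstacle is the parity-preserving path decomposition used in the second direction; the rest is bookkeeping with the handshake lemma. It is worth flagging that the hypothesis of non-negative weights is used in exactly two places --- when cancelled edges are dropped from the symmetric difference, and when the leftover cycles are discarded --- which is precisely why it appears in the statement.
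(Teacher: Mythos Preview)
Your argument is correct and is the standard elementary proof of this classical fact: one inequality by lifting an optimal matching to a $T$-join via shortest paths and taking symmetric differences, the other by greedily peeling simple paths between odd-degree vertices of a minimum $T$-join to obtain a perfect matching on $T$. Both uses of nonnegativity are identified precisely.

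There is nothing to compare against in the paper itself: the statement is quoted as a known fact attributed to \citet{edmonds-65} and is not proved there. So your write-up supplies a proof where the paper simply cites one. One minor polish: in the second direction you may want to note that the simple path you remove lies entirely in the current residual subgraph (which is immediate since the two odd-degree vertices lie in the same component of that subgraph), so edge-disjointness of the extracted paths is automatic; you already use this implicitly when summing $\sum_i \sumcost{P_i}\le \sumcost{H}$.
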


\begin{fact}[{\citealp{gt-91}}]
  A minimum $T$-join can be computed in
  % \begin{math}
  %   \apxO{\sizeof{T} m + \sizeof{T}^{3}}
  % \end{math}
  % or
  % \begin{math}
  %   \apxO{\sizeof{T} m + \sizeof{T}^{2.5} \log W}
  % \end{math}
  time; an $\epsmore$-minimum $T$-join can be computed in
  \begin{math}
    \apxO{\sizeof{T} m + \sizeof{T}^{2.5} \log \reps}.
  \end{math}
\end{fact}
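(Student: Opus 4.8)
The plan is to reduce the computation of a minimum (or approximately minimum) $T$-join to a minimum weight perfect matching instance and then invoke the matching algorithm of \citet{gt-91}. By \citet{edmonds-65}, when the edge weights are non-negative the minimum cost $T$-join in $\graph$ equals the minimum weight perfect matching on the complete graph $K_T$ on vertex set $T$, where the weight of a pair $(s,t)$ is the shortest path distance $d_G(s,t)$ in $\graph$; the metric costs considered here are positive, so this hypothesis holds. Thus it suffices to (i) build $K_T$ with its shortest-path weights, (ii) compute an approximate minimum weight perfect matching in $K_T$, and (iii) lift the matching back to an actual $T$-join in $\graph$. The running time will then be the sum of the costs of these three stages.

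First I would compute the pairwise distances $d_G(s,t)$ for $s,t \in T$. Since the weights are non-negative, a single-source shortest path computation from a fixed source $s$ takes $\apxO{m}$ time (Dijkstra with a heap) and returns $d_G(s,v)$ for every $v \in \vertices$, in particular for every $v \in T$. Running this once from each of the $\sizeof{T}$ sources in $T$ yields all required pairwise distances in $\apxO{\sizeof{T} m}$ time, which accounts for the first term in the claimed bound. Along the way I would retain the shortest-path trees, so that each realizing path can later be traced in time proportional to its length.

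Next, the graph $K_T$ has $\sizeof{T}$ vertices and $O(\sizeof{T}^2)$ edges. Applying the $\epsmore$-approximate matching algorithm of \citet{gt-91} with $n' = \sizeof{T}$ and $m' = O(\sizeof{T}^2)$ runs in $\apxO{m' \sqrt{n'} \log \reps} = \apxO{\sizeof{T}^{2.5} \log \reps}$ time, giving the second term. For the \emph{exact} minimum $T$-join one instead invokes the integer-weight bound $\apxO{m' \sqrt{n'} \log W}$ from the same fact: when the edge costs are integers in $[-W,W]$ the distances are integers of magnitude at most $nW$, so $\log(nW) = \apxO{\log W}$ and the matching step costs $\apxO{\sizeof{T}^{2.5} \log W}$, which is the quantity that fills the blank in the statement.

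Finally, I would convert the computed perfect matching $M'$ on $K_T$ into a subgraph of $\graph$ by taking the symmetric difference (edge-wise XOR) of the shortest paths realizing the matched pairs; this subgraph has exactly $T$ as its odd-degree set and is assembled in time proportional to the total path length, which is $\apxO{\sizeof{T} m}$ and hence within budget. The main point to verify is that the approximation guarantee survives the reduction. The cost of this $T$-join is at most the sum of the realizing path lengths, which equals the weight of $M'$ (cancellation of overlapping edges can only decrease cost since weights are non-negative); the weight of $M'$ is at most $\epsmore$ times the optimal matching weight; and by \citet{edmonds-65} the latter equals the minimum $T$-join cost. Chaining these inequalities shows the output is an $\epsmore$-minimum $T$-join. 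The only delicate step is this last bookkeeping, namely ensuring that exact shortest-path weights combined with an approximate matching still yield a genuine $\epsmore$ bound on the $T$-join cost; the running time itself is a routine accounting of $\sizeof{T}$ shortest-path calls plus one matching call on $K_T$.
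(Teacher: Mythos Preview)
Your proposal is correct and follows essentially the same approach as the paper's proof sketch: compute all pairwise shortest paths among $T$ via $\sizeof{T}$ runs of Dijkstra in $\apxO{\sizeof{T} m}$ time, then feed the resulting complete graph $K_T$ to the \citet{gt-91} matching algorithm for the $\apxO{\sizeof{T}^{2.5}}$ term. You supply additional (and correct) detail on lifting the matching back to a $T$-join via symmetric difference of paths and on why the approximation factor survives, but the underlying reduction is identical to the paper's.
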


\begin{proof}[Proof sketch]
  For non-negative edge costs, all shortest paths between vertices in
  $T$ can be found in $\apxO{\sizeof{T} m }$ time with Dijkstra's
  shortest path algorithms. Submitting these lengths to the matching
  algorithms of \citet{gt-91} (\reffact{gt-91}) gives the desired
  running times.
\end{proof}
A different reduction from $T$-joins to perfect matching, better
suited for sparse graphs and $\sizeof{T}$ large, is the following.
\begin{fact}[{\citealp[Theorem 3]{bkvz-99}}]
  \labelfact{t-join-gadget} The minimum cost $T$-join is equivalent to
  the minimum cost perfect matching of an auxiliary graph with
  $\bigO{m}$ nodes and $\bigO{m}$ edges. The auxiliary graph can be
  computed in $\bigO{m}$ time.
\end{fact}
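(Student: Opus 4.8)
The plan is to prove the fact constructively, by building the auxiliary graph $\graph'$ explicitly and exhibiting a cost-preserving correspondence between its perfect matchings and the $T$-joins of $\graph$. First I would introduce, for every edge $e = uv \in \edges$, a pair of \emph{port} vertices $a_e$ (placed in a gadget attached to $u$) and $b_e$ (placed in a gadget attached to $v$), joined by a single \emph{crossing edge} $\{a_e, b_e\}$ of cost $\cost{e}$; every other edge introduced below will have cost $0$. The intended semantics is that $\{a_e, b_e\}$ lies in a perfect matching exactly when $e$ belongs to the $T$-join, so that the matching cost equals the $T$-join cost. A port not matched across its crossing edge must instead be matched \emph{inside} the gadget of its endpoint, so at a vertex $v$ of degree $d = d(v)$ the number of internally matched ports equals $d$ minus the number of incident selected (crossing) edges.

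The core of the proof is a \emph{parity gadget} attached to each $v$ that enforces the constraint ``number of incident selected edges is odd iff $v \in T$'' while using only $\bigO{d}$ vertices and edges. The naive realization --- a clique on the $d$ ports, or a complete bipartite Tutte-style core --- enforces the parity but has $\Theta(d^2)$ edges, giving total size $\Theta(\sum_v d(v)^2)$, which is already $\Theta(n^2)$ for a star; squeezing this down to $\bigO{m}$ is precisely where the work lies. I would instead chain the ports along a path: introduce internal nodes $c_0, c_1, \dots, c_d$, add the base edges $\{c_{i-1}, c_i\}$, and join the $i$-th port to both $c_{i-1}$ and $c_i$ (a \emph{triangle chain}). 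Viewing a perfect matching of $\graph'$ restricted to $v$'s gadget, the across-matched ports are deleted and the surviving graph has $(d+1) + k$ vertices, where $k$ is the number of kept (internally matched) ports; hence any perfect matching forces $k \equiv d+1 \pmod 2$. A single extra pendant node attached to an endpoint of the chain adds one vertex and flips this forced parity, so by including it exactly when $v \notin T$ I can tune the forced value of $k$ to $d - [v \in T]$, i.e.\ force the number of selected edges to have parity $[v \in T]$.

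Two properties of the gadget must then be checked: \emph{soundness}, that no matching of the wrong parity exists, and \emph{completeness}, that the triangle chain admits a perfect matching for \emph{every} subset of kept ports of the correct parity. Soundness is immediate from the vertex-count argument above. Completeness is the step I expect to be the main obstacle, since a greedy pairing can get stuck and only the clique had it for free; I would establish it by a short transfer-matrix / left-to-right induction along the chain, observing that each surviving port can be matched to an adjacent chain node while leftover chain nodes close up via base edges. Granting completeness, gluing the per-vertex gadgets through the shared crossing edges turns any $T$-join into a perfect matching of $\graph'$ of equal cost, and conversely the crossing edges of any perfect matching form a subgraph whose odd-degree set is exactly $T$; since all non-crossing edges have cost $0$, the two optima coincide, giving the claimed equivalence. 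Finally, $v$'s gadget contributes $\bigO{d(v)}$ vertices and edges, so $\graph'$ has $\bigO{\sum_v d(v)} = \bigO{m}$ vertices and edges and is built in $\bigO{m}$ time.
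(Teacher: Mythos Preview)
The paper does not supply its own proof of this fact: it is stated as a cited result from \citet[Theorem~3]{bkvz-99} and used as a black box, so there is no in-paper argument to compare against.

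Your constructive proposal is correct. The triangle-chain gadget with an optional pendant does enforce exactly the required parity, and the completeness step you flag as the main obstacle goes through cleanly by the left-to-right scan you suggest: processing triangle $i$, a removed port forces the matched/unmatched state of the boundary chain node to flip (if $c_{i-1}$ is unmatched you must take $\{c_{i-1},c_i\}$, otherwise $c_i$ is left for the next triangle), while a kept port forces the state to be preserved (the port must match the currently unmatched one of $c_{i-1},c_i$). These transitions are forced, so a perfect matching of the gadget exists if and only if the number of removed ports---equivalently the number of selected incident edges---has the parity set by the pendant. Gluing the gadgets via the crossing edges gives the cost-preserving bijection you describe, and the $\bigO{d(v)}$-per-vertex accounting yields the $\bigO{m}$ size and construction time.
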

\begin{fact}
  \labelfact{sparse-t-join} The minimum $T$-join can be computed in
  $\apxO{m^{1.5} \log W}$ time if the edge costs
  are integers between 1 and W; an $\epsmore$-minimum $T$ join can be
  computed in $\apxO{m^{1.5} \log \reps}$ time.
\end{fact}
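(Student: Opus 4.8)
The plan is to reduce the minimum $T$-join problem on $\graph$ to a single minimum cost perfect matching instance whose size is governed by $m$ alone (rather than by $\sizeof{T}$, which can be as large as $\Theta(n)$), and then to call the Gabow--Tarjan matching algorithm. Concretely, I would apply the gadget reduction of \reffact{t-join-gadget}: in $\bigO{m}$ time it builds an auxiliary graph $G'$ with $\bigO{m}$ vertices and $\bigO{m}$ edges whose minimum cost perfect matching has the same cost as the minimum cost $T$-join of $\graph$, and from a (near-)optimal matching of $G'$ one can read off a (near-)optimal $T$-join in $\bigO{m}$ further time. This is the route to take rather than the shortest-path-plus-matching reduction of \reffact{edmonds-65}, since the latter produces a matching instance on $\sizeof{T}$ nodes and $\sizeof{T}^2$ edges, which is wasteful precisely when the underlying graph is sparse.

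Next I would check that the reduction preserves the cost structure in the form required by \reffact{gt-91}. The gadget retains the original edges of $\graph$ with their original costs and introduces only new edges of cost $0$, so if the costs of $\graph$ are integers in $[1,W]$ then all costs in $G'$ are integers of magnitude at most $W$, which is exactly the regime of \reffact{gt-91}. Running the exact Gabow--Tarjan algorithm on $G'$ then takes $\apxO{m^{1.5}\log W}$ time, since $G'$ has $\bigO{m}$ vertices and $\bigO{m}$ edges and the algorithm runs in time nearly linear in the edge count times the square root of the vertex count; adding the $\bigO{m}$ overhead of building $G'$ and extracting the $T$-join yields the first claimed bound. For the approximate version I would feed $G'$ to the $\epsmore$-approximate matching routine of \reffact{gt-91} instead, which runs in $\apxO{m^{1.5}\log\reps}$ time for general nonnegative weights, giving the second bound.

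The one point that genuinely needs care --- and the only real obstacle --- is that \reffact{t-join-gadget} must be \emph{approximation} preserving, not merely \emph{optimum} preserving: a perfect matching of $G'$ whose cost is within a $\epsmore$ factor of optimal must correspond to a $T$-join of $\graph$ within a $\epsmore$ factor of optimal, rather than only satisfying an additive guarantee. I expect this to follow directly from the structure of the gadget, in which every perfect matching of $G'$ splits into a set of original edges forming a $T$-join of $\graph$ together with a complement living inside the vertex gadgets and contributing cost exactly $0$; hence the matching cost in $G'$ and the $T$-join cost in $\graph$ coincide exactly, and any multiplicative guarantee for the former transfers verbatim to the latter. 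Once that correspondence is spelled out, the running-time accounting above is immediate.
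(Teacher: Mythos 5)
Your proposal is correct and follows exactly the route of the paper's (one-line) proof sketch: apply the gadget reduction of \reffact{t-join-gadget} to obtain a matching instance with $\bigO{m}$ vertices and edges, then run the Gabow--Tarjan algorithm of \reffact{gt-91}. Your additional observation that the gadget is cost-exact (new edges have cost $0$) and hence approximation-preserving is a worthwhile detail the paper leaves implicit.
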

\begin{proof}[Proof sketch]
  Here we combine \reffact{t-join-gadget} with the min-cost perfect
  matching algorithm of \citet{gt-91} to obtain the desired running
  time.
\end{proof}

\begin{remark}
  The bottleneck of Christofides' algorithm is computing a $T$-join,
  where $T$ can have size $\sizeof{T} = \bigOmega{n}$. Moreover, it
  suffices to compute an $\epsmore$-approximate minimum $T$-join for
  $\eps = \frac{2}{n}$, since the MST has cost at most
  $\parof{1 - \frac{1}{n}}$-fraction of the minimum cost tour.
  Combining the $\epsmore$-approximation algorithm \reffact{gt-91}
  with alternatively the reductions of \reffact{edmonds-65} or
  \reffact{t-join-gadget}, Christofides' algorithm can be implemented
  implemented in
  \begin{math}
    \apxO{m n + n^{2.5}}
  \end{math}
  or
  \begin{math}
    \apxO{m^{1.5}}
  \end{math}
  time.
\end{remark}

\paragraph{The $T$-join polytope:}

\begin{definition}
  The \emph{dominant} of a polytope $P$ is the set
  $\setof{P + x \where x \geq \zeroes}$.
\end{definition}

\begin{fact}[{\citealp{ej-73}}]
  The dominant of the $T$-join polytope is the set of vectors
  $x \in \reals^{\edges}$ such that
  \begin{align*}
    x \geq \zeroes \text{ and }
    \sum_{e \in \delta(S)} x_e \geq 1 \text{ for each } S \subseteq V
    \text{ with }
    \sizeof{S \cap T} \text{ odd.}
    \labelthisequation[\textsf{JD}]{t-join-dominant}
  \end{align*}
\end{fact}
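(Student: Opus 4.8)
The plan is to prove the two inclusions behind the identity ``dominant of the $T$-join polytope $=Q$'', where I write $Q\subseteq\reals^{\edges}$ for the polyhedron on the right, $P$ for the convex hull of incidence vectors of $T$-joins, and $P^{\uparrow}$ for its dominant. First I would record that $Q$ is \emph{upward closed}: each defining inequality ($x\geq\zeroes$, and $\sum_{e\in\delta(S)}x_e\geq 1$ for $S$ with $\sizeof{S\cap T}$ odd) is nondecreasing in $x$, so $x\in Q$ and $x'\geq x$ force $x'\in Q$; hence the recession cone of $Q$ is exactly $\reals^{\edges}_{\geq 0}$, and since $Q$ sits in the nonnegative orthant it is pointed. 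For the easy inclusion $P^{\uparrow}\subseteq Q$ it then suffices, by upward closure and convexity, to check that the incidence vector $\chi^{J}$ of an arbitrary $T$-join $J$ lies in $Q$. Nonnegativity is immediate, and for $S$ with $\sizeof{S\cap T}$ odd I would argue by parity: $\sum_{v\in S}\deg_{J}(v)$ counts each edge of $J$ with both ends in $S$ twice and each edge of $J$ in $\delta(S)$ once, so $\sizeof{J\cap\delta(S)}\equiv\sum_{v\in S}\deg_{J}(v)\pmod 2$; since $\deg_{J}(v)$ is odd exactly when $v\in T$, the right side is $\equiv\sizeof{S\cap T}\equiv 1\pmod 2$, and therefore $\sizeof{J\cap\delta(S)}\geq 1$.

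For the reverse inclusion $Q\subseteq P^{\uparrow}$ I would reduce to two sub-claims. (i) \emph{$Q$ is an integral polyhedron}, i.e.\ every vertex of $Q$ is an integer vector. (ii) \emph{Every integral point of $Q$ dominates a $T$-join}: given integral $x\in Q$, let $H$ be the subgraph with edge set $\setof{e\where x_e\geq 1}$; if some connected component $C$ of $H$ had $\sizeof{V(C)\cap T}$ odd, then $S=V(C)$ would give $\delta(S)$ containing no edge of $H$, contradicting $\sum_{e\in\delta(S)}x_e\geq 1$; so every component of $H$ contains an even number of $T$-vertices, which is precisely the condition that $H$ contains a $T$-join $J$, and then $x\geq\chi^{H}\geq\chi^{J}$, so $x\in P^{\uparrow}$. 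Granting (i) and (ii), write $Q$ as the Minkowski sum of the convex hull of its finitely many (now integral) vertices and its recession cone $\reals^{\edges}_{\geq 0}$; each vertex lies in $P^{\uparrow}$ by (i) and (ii), and since $P^{\uparrow}$ is convex and upward closed, $Q\subseteq P^{\uparrow}$, completing the equality.

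The step I expect to be the main obstacle is sub-claim (i), integrality of $Q$; this is the genuine content of Edmonds--Johnson, and everything else above is routine parity counting. I see two routes. The classical one is to show the system $\setof{x\geq\zeroes,\ \sum_{e\in\delta(S)}x_e\geq 1:\sizeof{S\cap T}\text{ odd}}$ is totally dual integral --- which, as the right-hand sides are integral, yields integrality of $Q$ --- by taking a minimal counterexample, using an uncrossing argument to replace an optimal dual solution by one supported on a laminar family of odd cuts, and then inducting by ``shrinking'' a smallest tight odd set to a single vertex (shrinking an even set preserves all the relevant parities, and a tight odd set becomes an odd singleton), which strictly shrinks the instance. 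The second, more self-contained route given what is already set up in this paper, is to push the gadget of \reffact{t-join-gadget} through at the polyhedral level: $T$-joins of $\graph$ correspond to perfect matchings of the auxiliary graph, the odd-cut inequalities defining $Q$ pull back to Edmonds' odd-set inequalities for that graph, and integrality of $Q$ follows from Edmonds' perfect matching polytope theorem. I would lead with whichever of these the rest of the paper makes cheapest to invoke.
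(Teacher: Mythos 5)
First, a note on the comparison you asked for: the paper does not prove this statement at all --- it is imported verbatim as a classical fact of Edmonds and Johnson with a citation and no argument --- so there is no in-paper proof to measure you against, and I am evaluating your attempt on its own terms.

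Your reduction framework is correct and cleanly executed. The parity count showing $\chi^J$ satisfies every odd-cut inequality, the observation that $Q$ is upward closed and pointed with recession cone the nonnegative orthant, the Minkowski decomposition of $Q$ into the convex hull of its vertices plus that cone, and sub-claim (ii) (an integral point of $Q$ dominates a $T$-join, via the ``every component contains an even number of $T$-vertices'' criterion) are all right. The genuine gap is exactly where you predicted it: sub-claim (i) is the entire content of the theorem, and neither of your proposed routes delivers it as stated. Route 1 rests on a false premise: the system $\setof{x \geq \zeroes,\ \sum_{e \in \delta(S)} x_e \geq 1 \text{ for } \sizeof{S \cap T} \text{ odd}}$ is \emph{not} totally dual integral in general. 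Take $G = K_4$, $T = V$, unit costs. The odd cuts are exactly the vertex stars $\delta(v)$, the primal optimum is $2$ (a perfect matching), and the fractional dual attains $2$ by putting $y_{\setof{v}} = \tfrac12$ on each singleton; but any two vertex stars of $K_4$ share an edge, so the best integral dual is $1$. Integrality of $Q$ is true, but it does not follow from TDI of this system; one has to run the uncrossing argument directly on the tight constraints at a vertex of $Q$ (or invoke half-integral dual results and do more work), which is a genuinely different and more delicate argument than the one you sketch. Route 2 is viable --- it is essentially the textbook derivation from Edmonds' perfect matching polytope theorem --- but as written it is a one-sentence gesture at the hard step: you must first obtain the inequality description of the \emph{convex hull} of $T$-joins by pulling the odd-set inequalities back through the vertex-splitting gadget of \reffact{t-join-gadget}, and then pass from that convex hull to its dominant; neither transfer is immediate. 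So the proposal correctly isolates where the work lies, but does not do that work, and the tool it leads with (TDI) would fail.
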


\iffalse
\begin{observation}
  \labelobservation{2ecss-is-v-join}
  The feasible region of \refequation{2ecss} is twice the dominant
  of the $T$-join polytope,
  \begin{math}
    2 \refequation{t-join-dominant} = \setof{ %
      2 x \where x\in \refequation{t-join-dominant} %
    },
  \end{math}
  for $T = \vertices$.
\end{observation}
\fi

\begin{observation}[\cite{wolsey-80}]
  \labelobservation{2ecss-contains-t-joins} %
  Suppose $x \in \reals^n$ is feasible for the LP \refequation{2ecss},
  then $\frac{x}{2}$ is in the dominant of the
  $T$-join polytope for any even $T \subseteq V$.
\end{observation}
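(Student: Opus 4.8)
The plan is to verify directly that $x/2$ satisfies the linear description \refequation{t-join-dominant} of the dominant of the $T$-join polytope due to \citet{ej-73}; since that description \emph{is} the defining inequality system of the dominant, membership in it is precisely what must be shown. There are only two families of constraints to check: nonnegativity, and the odd-cut inequalities.

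Nonnegativity is immediate: as a feasible point of \refequation{2ecss} we have $x \geq \zeroes$, hence $x/2 \geq \zeroes$.

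For the odd-cut inequalities I would fix an arbitrary $S \subseteq V$ with $\sizeof{S \cap T}$ odd and first argue that $S$ is a \emph{nontrivial} cut, i.e.\ $\emptyset \subsetneq S \subsetneq V$. Indeed, if $S = \emptyset$ then $\sizeof{S \cap T} = 0$ is even, and if $S = V$ then $\sizeof{S \cap T} = \sizeof{T}$ is even because $T$ is an even set; both contradict that $\sizeof{S \cap T}$ is odd. This is the only place the hypothesis ``$T$ even'' enters. Since $S$ is nontrivial, the constraint of \refequation{2ecss} corresponding to $S$ applies, so $\sum_{e \in \delta(S)} x_e \geq 2$, and therefore $\sum_{e \in \delta(S)} \frac{x_e}{2} \geq 1$, which is exactly the required inequality of \refequation{t-join-dominant} for the set $S$.

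Combining the two checks, $x/2$ satisfies every inequality of \refequation{t-join-dominant} and hence lies in the dominant of the $T$-join polytope. I do not expect any genuine obstacle here: the argument is essentially a containment of constraint systems, and the only point worth stating carefully is that the trivial cuts $\emptyset$ and $V$ — on which the odd-cut inequality would fail, and which are deliberately absent from \refequation{2ecss} — are ruled out automatically by the parity condition once $T$ is even.
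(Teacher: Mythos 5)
Your proof is correct and is exactly the argument the paper intends (the paper states this observation without proof, relying implicitly on the Edmonds--Johnson description \refequation{t-join-dominant}): nonnegativity is inherited, and every odd cut $S$ is nontrivial precisely because $T$ is even, so the \refequation{2ecss} constraint $\sum_{e \in \delta(S)} x_e \geq 2$ applies and halves to the required inequality. Nothing is missing.
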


\section{Sparsifying solutions to \refequation{2ecss}}
\labelsection{sparsification} In this section we prove
\reftheorem{sparsifying-2ecss}.  The idea is straight forward in
retrospect. Let $x$ be a feasible solution to \refequation{2ecss}. We
can view $x$ as capacities on the edges w/r/t which $G$ is
$2$-edge-connected.  and apply cut sparsification techniques to obtain
another solution $x'$ where $x'$ is sparse.  Cut sparsification is a
standard technique with many applications. Here we also have an
objective function that needs to be preserved, and a black box
sparsification does not suffice. We observe that random sampling
based cut sparsification techniques \cite{bk-15,fhhp-11} are based on
importance sampling and can be adjusted to preserve the cost of the
objective function.  We first give the high-level details of the
scheme from \cite{bk-15} and then build upon it to derive our result.

\begin{setting}
  Let $\graph = (\vertices,\edges,\weight)$ be a weighted undirected
  graph.
\end{setting}

\begin{definition}
  $\graph$ is \emph{$k$-connected} if the value of each cut in $G$ is at
  least $k$.
\end{definition}
\begin{definition}
  A \emph{$k$-strong component} is a maximal $k$-connected
  vertex-induced subgraph of $\graph$.
\end{definition}
\begin{definition}
  The \emph{strong connectivity} or \emph{strength} of an edge $e$,
  denoted by $\strength{e}$, is the maximum value of $k$ such that a
  $k$-strong component contains (both endpoints of) $e$.
\end{definition}
\begin{fact}[{\citealp[Lemma 4.11]{bk-15}}]
  \begin{math}
    \sum_{e \in \edges} \frac{\weight{e}}{\strength{e}} \leq n - 1.
  \end{math}
\end{fact}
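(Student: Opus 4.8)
The plan is to prove $\sum_{e \in \edges} \weight{e}/\strength{e} \le n-1$, where $n = \sizeof{\vertices}$, by strong induction on $n$, at each step deleting the edges of a single minimum cut of $\graph$; write $k_e := \strength{e}$ throughout, and assume all edge weights are positive (a zero-weight edge contributes nothing to the sum, and removing it can only decrease the strengths of the remaining edges, hence only increase the sum, while leaving $n$ unchanged). The base case $n = 1$ is immediate since $\graph$ has no edges. If $\graph$ is disconnected, I would argue componentwise: for an edge $e$ inside a component $C$, any vertex-induced subgraph that contains both endpoints of $e$ together with a vertex outside $C$ has a cut of value $0$, so it does not affect $k_e$; thus $k_e$ is the same computed in $\graph$ or in $C$, and the induction hypothesis applied to the components, which have $n_1, \dots, n_t$ vertices with $\sum_i n_i = n$, yields $\sum_e \weight{e}/k_e \le \sum_i (n_i - 1) = n - t \le n-1$. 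So it remains to treat connected $\graph$ with $n \ge 2$; let $c > 0$ be the value of a minimum cut.

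Next I would record two facts. (i) Every edge $e$ has $k_e \ge c$, since $\graph$ is itself a $c$-connected vertex-induced subgraph containing $e$. (ii) Fixing a minimum cut $(S, \vertices \setminus S)$, every $e \in \delta(S)$ has $k_e = c$ exactly: any vertex-induced subgraph $H$ containing both endpoints of $e$ inherits the cut induced by $S$, whose crossing edges form a subset of $\delta(S)$, hence has a cut of value $\le c$; so $H$ is at most $c$-connected and cannot make $k_e$ exceed $c$. I would also use the trivial monotonicity that deleting edges never increases the value of any cut, hence never increases the strength of a surviving edge.

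For the inductive step, fix a minimum cut $(S, \vertices \setminus S)$ and set $F = \delta(S)$, so $\sum_{e \in F} \weight{e} = c$ and $k_e = c$ for all $e \in F$ by (ii). In $\graph - F$ the sets $S$ and $\vertices \setminus S$ are separated, so $\graph - F$ has $t \ge 2$ connected components $C_1, \dots, C_t$, with $n_i := \sizeof{V(C_i)}$ and $\sum_i n_i = n$, and $\edges$ is the disjoint union of $F$ with the edge sets of the $C_i$. Let $k_e'$ be the strength of a surviving edge $e$ in $\graph - F$ (equivalently, in the component containing it); monotonicity gives $k_e' \le k_e$, and the induction hypothesis on each $C_i$ gives $\sum_i \sum_{e \in C_i} \weight{e}/k_e' \le \sum_i (n_i - 1) = n - t$. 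Therefore
\begin{align*}
  \sum_{e \in \edges} \frac{\weight{e}}{k_e}
  &= \frac{1}{c}\sum_{e \in F} \weight{e} \;+\; \sum_i \sum_{e \in C_i} \frac{\weight{e}}{k_e}\\
  &\le 1 \;+\; \sum_i \sum_{e \in C_i} \frac{\weight{e}}{k_e'}
  \;\le\; 1 + (n - t) \;\le\; n - 1,
\end{align*}
using $t \ge 2$ in the final step, which closes the induction.

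The one genuinely delicate point is fact (ii) --- that the edges of a minimum cut have strength equal to (not merely at least) the minimum cut value --- and the related design choice of deleting exactly one minimum cut rather than the set of all edges of smallest strength: the latter set need not disconnect $\graph$, which would break the bound $t \ge 2$. Everything else is routine bookkeeping. One could alternatively route the argument through Cunningham's characterization of graph strength via the Nash--Williams/Tutte partition theorem, but the direct induction above appears shortest.
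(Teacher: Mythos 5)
Your proof is correct, and it is essentially the standard argument of Benczúr and Karger for this lemma (the paper itself only cites the fact without proof): induct on $n$, peel off a minimum cut, and recurse on the components that remain. One small simplification: your fact (ii), that the min-cut edges have strength exactly $c$, is more than you need --- the easy inequality $k_e \ge c$ from fact (i) already gives $\sum_{e \in \delta(S)} w_e / k_e \le c/c = 1$ for the peeled-off edges, which is all the induction requires.
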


\begin{fact}[{\citealp[Compression Theorem 6.2]{bk-15}}]
  \labelfact{sparsification} Let $p: \edges \to [0,1]$ be a set of
  probabilities on the edges of $\graph$. Let
  $H = (\vertices,\edges',\weight')$ be a random weighted graph where
  for each edge $e \in \edges$, $\edges'$ independently samples
  \begin{math}
    e
  \end{math}
  with weight $\weight{e}' = \frac{\weight{e}}{p_e}$ with probability
  $p_e$, and $e \notin \edges$ with probability $1 - p_e$. For
  $\delta \geq \bigOmega{\log n}$, if
  \begin{math}
    p_e \geq \min{1, \frac{\delta}{\strength{e}}}
  \end{math}
  for all $e \in \edges$, then with probability
  $1 - \expof{-\bigOmega{\eps^2 \delta}}$, every cut in $H$ has value
  between $\epsless$ and $\epsmore$ times its value in $G$.
\end{fact}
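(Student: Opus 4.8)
The plan is to follow the importance-sampling analysis of \citet{bk-15}: first observe that $H$ reproduces $G$ cut-by-cut \emph{in expectation}, then prove a concentration bound around this expectation that holds \emph{simultaneously} over all cuts. Fix a cut, i.e.\ a nonempty proper $U \subsetneq \vertices$, let $E_U$ be the set of edges crossing $U$, and write $c(U) = \sum_{e \in E_U}\weight{e}$ for its value in $\graph$. In $H$ the same cut has value $\hat c(U) = \sum_{e \in E_U} (\weight{e}/p_e)\,X_e$, where $X_e$ indicates that $e$ was sampled. Since $\mathbb{E}[(\weight{e}/p_e)X_e] = \weight{e}$, we get $\mathbb{E}[\hat c(U)] = c(U)$, so every cut is unbiased and the entire content of the theorem is concentration, uniformly over the up to $2^{n-1}$ cuts. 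The structural fact that drives the argument is that every edge $e$ crossing $U$ satisfies $\strength{e} \le c(U)$: both endpoints of $e$ lie in a common $\strength{e}$-strong vertex-induced subgraph $S$, and $U$ induces a nontrivial cut of $S$ whose value is at least $\strength{e}$ (strength of $S$) and at most $c(U)$ (the $S$-edges crossing it are a subset of $E_U$). Hence each crossing edge is sampled at rate $p_e \ge \min\{1,\delta/\strength{e}\} \ge \min\{1,\delta/c(U)\}$.

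The obstacle is that a one-shot Chernoff bound on $\hat c(U)$ is too weak. The largest summand is $\weight{e}/p_e \le \weight{e}\,\strength{e}/\delta$, which (using $\weight{e}\le\strength{e}\le c(U)$) can be as large as $c(U)^2/\delta$ when a single heavy, high-strength edge dominates the cut; the resulting per-cut bound, roughly $\expof{-\bigOmega{\eps^2\delta/c(U)}}$, cannot survive a union bound over exponentially many cuts. This is exactly why the edge \emph{strengths} rather than the cut values must be exploited. The resolution is to decompose $\graph$ along its strong-component hierarchy: the $k$-strong components for varying $k$ form a laminar family, and each edge is charged to the level $k \approx \strength{e}$ at which it first sits inside a $k$-strong component $S$ whose local minimum cut is at least $k$. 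At that scale the sampling looks \emph{uniform}: crossing edges are sampled at rate $\ge \delta/k$ against a local minimum cut $\ge k$, so the expected sampled mass of any nontrivial cut of $S$ is amplified to $\bigOmega{\delta} = \bigOmega{\log n}$.

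The core step is a per-level cut-preservation bound obtained by combining Chernoff concentration with Karger's cut-counting estimate (the counting bound used in \citet{bk-15}), namely that a graph with minimum cut $\mu$ has at most $n^{2\alpha}$ cuts of value $\le \alpha\mu$. Inside a $k$-strong component $S$ (so $\mu \ge k$) I would sum the failure probability over value-scales: the cuts of $S$ with value in $[\alpha k, 2\alpha k)$ number at most $n^{\bigO{\alpha}}$, while each such cut, receiving expected sampled mass $\bigOmega{\alpha\delta}$ from edges sampled at rate $\ge \delta/k$, deviates by more than a factor between $\epsless$ and $\epsmore$ with probability at most $\expof{-\bigOmega{\eps^2\alpha\delta}}$. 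The product $n^{\bigO{\alpha}}\,\expof{-\bigOmega{\eps^2\alpha\delta}}$ is summable over $\alpha\ge 1$ once $\delta = \bigOmega{\log n}$ with a sufficiently large constant (so that $\eps^2\delta$ dominates $\log n$), yielding total failure $\expof{-\bigOmega{\eps^2\delta}}$ for that component. Summing over all components of the laminar hierarchy, charging each edge once at its own level, and using that the value of any global cut of $\graph$ decomposes as the sum of its contributions across levels, each preserved within $\epsless$ and $\epsmore$, extends the guarantee to every cut of $\graph$ with the stated failure probability.

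The step I expect to be the main obstacle is this last reconciliation of cut-counting with per-cut concentration across levels, and in particular the case of a cut dominated by a single heavy, high-strength edge. Such a cut has small effective sample mass (only $\bigOmega{\delta}$, not $\bigOmega{\alpha\delta}$), yet it must be counted against a scale where enough effective edges are available; the laminar strength hierarchy is precisely the device that supplies this, because a heavy edge of large strength forces a dense strong component around it in which the local minimum cut — and hence the cut-counting exponent — is correspondingly large. Making the cross-level accounting airtight (verifying that each global cut's per-level pieces are each controlled, and that no cut is undercounted or charged at the wrong scale) is the delicate technical heart of the argument, and is where I would spend the bulk of the rigor, following the bookkeeping of \citet{bk-15}.
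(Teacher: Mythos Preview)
The paper does not prove this statement; it is cited as Compression Theorem~6.2 of \citet{bk-15} and invoked as a black box in the proof of \reflemma{sparsification}. There is no in-paper argument to compare against, and your sketch --- laminar strong-component hierarchy, Karger cut counting at each level, per-level Chernoff, cross-level gluing --- is exactly the Bencz\'ur--Karger proof one finds in the cited reference.

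One substantive remark, since you correctly flagged the ``single heavy, high-strength edge'' as the crux. The hypothesis as printed appears to omit the edge weight from the sampling threshold: Bencz\'ur--Karger actually require $p_e \ge \min\{1,\ \delta\,\weight{e}/\strength{e}\}$, and indeed the paper itself uses this weighted form, $p_e = \min\{1,\ \delta x_e/(\eps^2\apxstrength{e})\}$, when it applies the fact inside \reflemma{sparsification}. Without the weight factor the claim is false: a two-vertex graph with a single edge of weight $c \gg \delta$ has $\strength{e}=c$, so $p_e=\delta/c \ll 1$ is permitted, and the unique cut is then $0$ with probability $1-\delta/c$. This is precisely the obstacle you sensed, and under the printed hypothesis the laminar hierarchy genuinely cannot rescue it --- the strong component around that edge need contain no other edges at all. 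With the corrected threshold each summand satisfies $\weight{e}/p_e \le \strength{e}/\delta \le c(U)/\delta$ across any cut $U$, the heavy-edge case dissolves, and your per-level Chernoff-plus-cut-counting outline goes through as you described.
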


\begin{fact}[{\citealp[Theorem 6.5]{bk-15}}]
  \labelfact{apx-strengths} %
  In $\bigO{m \log^3 n}$ time, one can compute values
  $\apxstrength{e} \geq 0$ for each $e \in \edges$ such that
  \begin{math}
    \apxstrength{e} \leq \strength{e}
  \end{math}
  for each $e \in \edges$ and
  \begin{math}
    \sum_{e \in \edges} \frac{\weight{e}}{\apxstrength{e}} = \bigO{n}.
  \end{math}
\end{fact}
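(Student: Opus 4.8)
The goal is to produce, in near-linear time, per-edge numbers that are simultaneously valid lower bounds on the true strengths and tight enough in aggregate. The first step I would take is to reduce both required properties to a single cleaner target: it suffices to compute, for every edge, an estimate within a constant factor of its strength \emph{from below}, i.e.\ $\strength{e}/c \le \apxstrength{e} \le \strength{e}$ for an absolute constant $c$. The lower bound $\apxstrength{e}\le\strength{e}$ is then one of the two requirements directly, and the weighted sum is controlled by the strength-sum bound $\sum_{e\in\edges}\weight{e}/\strength{e}\le n-1$ (stated above), since $\sum_{e\in\edges}\weight{e}/\apxstrength{e}\le c\sum_{e\in\edges}\weight{e}/\strength{e}\le c(n-1)=\bigO{n}$. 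Thus the entire task reduces to computing constant-factor underestimates of strength in $\bigO{m\log^3 n}$ time.

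For the estimation I would exploit two structural features. First, the $k$-strong components are nested: as $k$ increases the partition of $\vertices$ only refines, so each edge $e$ has a well-defined threshold $\strength{e}$ at which its endpoints first separate, and it is enough to locate this threshold up to a factor of two. Second, the strength-sum bound makes low-strength edges scarce by weight: the total weight of edges with $\strength{e}<k$ is at most $k(n-1)$. The algorithm I would use is recursive. At each level it fixes a threshold $k$, computes an approximation to the partition into $k$-strong components, assigns the estimate $\Theta(k)$ to every edge crossing that partition (these are the ``weak'' edges at the current level), contracts each component to a single node, and recurses on the contracted multigraph to resolve the higher strengths. The approximate $k$-strong partition can be obtained from a bounded-size sparse certificate---a union of $\bigO{k}$ maximum-spanning forests in the Nagamochi--Ibaraki sense, via the weighted maximum-adjacency ordering---or, as in \citet{bk-15}, by sampling each edge with rate $\Theta(\weight{e}\log n/k)$ and taking connected components. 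Choosing the thresholds so that each contraction shrinks the instance by a constant factor bounds the recursion depth by $\bigO{\log n}$ (with no dependence on the weight scale), and the scarcity of weak edges is what keeps each contracted instance small.

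The running time is then bookkeeping: the recursion has depth $\bigO{\log n}$, and each level costs $\bigO{m\log^2 n}$---the sampling density $\Theta(\log n)$ (or certificate size) needed for a high-probability guarantee, multiplied by the cost of the connectivity/union-find and contraction step---so that aggressive contraction makes the work telescope to $\bigO{m\log^3 n}$ overall rather than multiplying the edge count across levels. I expect the main obstacle to be precisely the tension that forces approximation in the first place: computing the $k$-strong components \emph{exactly} would require many minimum-cut computations and is far too slow, whereas a single cheap partition risks either over-estimating some strength, which would break $\apxstrength{e}\le\strength{e}$, or under-estimating it by more than a constant factor, which would inflate $\sum_{e\in\edges}\weight{e}/\apxstrength{e}$ beyond $\bigO{n}$. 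The heart of the proof is to show that the per-level approximation can be made one-sided and constant-factor-accurate at the same time, with high probability, while staying cheap enough to contract and recurse; establishing this simultaneous control of one-sidedness, accuracy, and speed is the technical core of the argument of \citet{bk-15}.
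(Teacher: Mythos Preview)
The paper does not prove this statement. It is recorded as a \emph{Fact} with an explicit citation to Theorem~6.5 of Bencz\'ur and Karger and is used purely as a black box in the proof of \reflemma{sparsification}; no argument for it appears anywhere in the paper. So there is nothing here to compare your proposal against.

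For what it is worth, your sketch is a reasonable high-level outline of the Bencz\'ur--Karger argument: the reduction to constant-factor underestimates via the strength-sum bound $\sum_e \weight{e}/\strength{e}\le n-1$ is exactly right, and the recursive ``threshold, certify, contract'' picture is the correct shape. If you ever want to turn it into a real proof, the place where your sketch is loosest is the per-level step: a sparse $k$-certificate tells you that the edges \emph{outside} it have strength at least $k$, not that the edges inside it have strength below $k$, and turning that one-sided information into labels that are simultaneously valid underestimates and tight enough in aggregate is where the actual work in Bencz\'ur--Karger lies. For the purposes of this paper, though, none of that is needed.
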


\begin{lemma}
  \labellemma{sparsification} Given a feasible solution $x$ to
  \refequation{2ecss}, and a non-negative cost function $\cost :
  \edges \to \nnreals$, an $\eps > 0$,
  there is a randomized algorithm that runs in $\bigO{m
    \log^3 n}$ time and with probability at least $(1 - 1/n^2)$,
  outputs another feasible point $y$ for \refequation{2ecss} such that
  (i) $\rip{\cost}{y} = \epsmore \rip{\cost}{x}$, (ii) $\support{y} \subseteq
  \support{x}$, and (iii)
  $\sizeof{\support{y}} = \bigO{\frac{n \log n}{\eps^2}}$ .
\end{lemma}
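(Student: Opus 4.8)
The plan is to apply the cut-sparsification machinery of \citet{bk-15} to the graph $G$ with edge capacities given by the feasible solution $x$, being careful to also track the cost function $\cost$. Since $x$ is feasible for \refequation{2ecss}, the graph $G$ with capacities $x$ is $2$-edge-connected, i.e.\ every cut has value at least $2$ under $x$. First I would instantiate \reffact{apx-strengths} with the weighted graph $(\vertices,\support{x},x)$ to obtain, in $\bigO{m \log^3 n}$ time, lower bounds $\apxstrength{e} \le \strength{e}$ on the edge strengths (computed with respect to the capacities $x$) satisfying $\sum_{e} x_e/\apxstrength{e} = \bigO{n}$. Then I set the sampling probabilities $p_e = \min\{1, \delta/\apxstrength{e}\}$ for a parameter $\delta = \Theta(\log n / \eps^2)$, chosen large enough that the failure probability $\expof{-\bigOmega{\eps^2\delta}}$ in \reffact{sparsification} is at most $1/n^2$ (and also $\delta \ge \bigOmega{\log n}$, which holds for small $\eps$). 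I form the random subgraph $H$ as in \reffact{sparsification}: each $e$ is kept with probability $p_e$ and, if kept, its capacity is rescaled to $x_e/p_e$; call the resulting capacity vector $y$, extended by zero off $\support{H}$. Clearly $\support{y} \subseteq \support{x}$, giving (ii).

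For feasibility, by \reffact{sparsification}, with probability $1 - \expof{-\bigOmega{\eps^2\delta}} \ge 1 - 1/n^2$ every cut in $H$ has $y$-value within a $(1\pm\eps)$ factor of its $x$-value. Since every cut has $x$-value at least $2$, scaling $y$ up by $1/(1-\eps)$ restores the constraint $\sum_{e \in \delta(U)} y_e \ge 2$ for all $U$; I would fold this $1/(1-\eps) = 1 + O(\eps)$ factor into the final rescaling, or equivalently observe $\epsless \cdot 2 \ge 2(1-\eps)$ and absorb the loss into the $\eps$ in the statement (the paper's $\epsmore$ notation permits constant-factor slack in $\eps$). For the sparsity bound (iii), the expected support size is $\sum_e p_e \le \sum_e \delta/\apxstrength{e} = \delta \sum_e x_e/\apxstrength{e} \le \delta \cdot \bigO{n} = \bigO{n\log n/\eps^2}$; a Chernoff bound (the $p_e$ are independent indicators) shows $\sizeof{\support{y}} = \bigO{n\log n/\eps^2}$ with probability $\ge 1 - 1/n^2$. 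For the cost bound (i), $\rip{\cost}{y} = \sum_{e \in \support{H}} \cost{e} x_e / p_e$, and since each kept edge contributes $\cost{e}x_e/p_e$ with probability $p_e$, we have $\Ex{\rip{\cost}{y}} = \sum_e \cost{e} x_e = \rip{\cost}{x}$. Concentration here is the one genuinely non-routine point, addressed below.

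The main obstacle is the concentration of the cost estimator $\rip{\cost}{y}$, because individual terms $\cost{e}x_e/p_e$ can be large when $p_e$ is small (small strength), so a naive Chernoff/Hoeffding bound does not immediately give a $(1+\eps)$ multiplicative guarantee. The way I would handle this is to bucket edges by strength: since $x_e/p_e = \max\{x_e, x_e\apxstrength{e}/\delta\}$ and $\sum_e x_e/\apxstrength{e} = \bigO{n}$, the edges with strength below $\delta$ each contribute variance of order $(\cost{e}x_e/p_e) \cdot \cost{e}x_e$ after scaling, and the sum of the "per-edge weight" $x_e/p_e$ is controlled. More cleanly, one can run the sampling on the cost-weighted capacities or apply Freedman/Bernstein with the observation that $\max_e \cost{e}x_e/p_e \le (\text{something}) \cdot \eps^2 \rip{\cost}{x}/\log n$ after the strength-based choice of $p_e$ — this is exactly the computation that the BK importance-sampling analysis already does for cut values, and it transfers to any non-negative linear functional of the capacities. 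I would cite the structure of the BK analysis (Theorem 6.2 of \citet{bk-15}) and note that their martingale argument bounds deviations of $\sum_e w'(e)$ restricted to any fixed edge subset, in particular the subset weighted by $\cost$; alternatively invoke the observation in the introduction that \cite{fhhp-11}-style importance sampling preserves linear functionals. Taking a union bound over the $O(1)$ failure events (cut approximation, sparsity, cost concentration) gives overall success probability $\ge 1 - 1/n^2$, and the running time is dominated by the $\bigO{m\log^3 n}$ strength estimation, completing the proof.
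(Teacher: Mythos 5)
Your overall architecture matches the paper's: compute approximate strengths of $(G,x)$ via \reffact{apx-strengths}, sample edges independently with strength-based probabilities, rescale sampled weights by $1/p_e$, invoke \reffact{sparsification} for cut preservation, Chernoff for the support size, and a final $(1+O(\eps))$ blow-up to restore feasibility. But there is a genuine gap exactly at the point you flag as ``the one genuinely non-routine point,'' and none of the fixes you sketch actually closes it. With only the strength-based probabilities, the cost estimator $\sum_e \cost{e} x_e / p_e \cdot [e \text{ sampled}]$ does \emph{not} concentrate: the strengths $\strength{e}$ are determined entirely by the capacities $x$ and know nothing about $\cost$. Concretely, take an edge $e$ lying inside a highly connected component of $(G,x)$, so that $p_e$ is tiny (say $1/n$), but assign it a cost with $\cost{e} x_e = \frac{1}{2}\sum_{e'} \cost{e'} x_{e'}$. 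Then with probability $p_e = 1/n$ the sampled cost includes a term $\cost{e} x_e / p_e$ that is $\Theta(n)$ times the true cost, so the failure probability for property (i) is $1/n$, not $1/n^2$, and no bucketing-by-strength or Freedman/Bernstein argument can rescue this, since those all require an a priori bound on the maximum term $\cost{e}x_e/p_e$ relative to $\sum_{e'}\cost{e'}x_{e'}$, which simply does not hold. Your assertion that the BK martingale analysis ``transfers to any non-negative linear functional of the capacities'' is false for the same reason: BK's argument is tailored to cut values, i.e., to $\{0,1\}$-coefficient functionals of $x$, organized by the strength hierarchy.

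The paper's proof closes this gap with one extra idea: it boosts the sampling probabilities to $r_e = \max\{p_e, q_e\}$ where $q_e = \min\{1,\ \delta \cost{e} x_e / (\eps^2 \sum_{e'} \cost{e'} x_{e'})\}$. This is importance sampling with respect to the cost: it guarantees that every term of the cost sum satisfies $\cost{e} x_e / r_e \le \eps^2 \sum_{e'}\cost{e'}x_{e'} / \delta$, so the multiplicative Chernoff bound gives $(1\pm\eps)$ concentration of the cost with probability $1 - \expof{-\bigOmega{\delta}}$, while adding only $\sum_e q_e = \bigO{\delta/\eps^2}$ to the expected support size (negligible against the $\bigO{n\delta/\eps^2}$ from the $p_e$). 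Since $r_e \ge p_e$, \reffact{sparsification} still applies to the cuts. This is the precise sense in which, as the introduction says, importance-sampling-based sparsifiers can be ``adjusted to preserve the cost'' --- the adjustment is an explicit second sampling rate, not a property of the strength-based rates alone. (Two smaller issues: your $p_e = \min\{1,\delta/\apxstrength{e}\}$ drops the factor $x_e$ that the paper's $p_e = \min\{1, \delta x_e/(\eps^2\apxstrength{e})\}$ carries, and your subsequent identity $\sum_e \delta/\apxstrength{e} = \delta\sum_e x_e/\apxstrength{e}$ is not valid; the guarantee from \reffact{apx-strengths} bounds $\sum_e x_e/\apxstrength{e}$, so the $x_e$ must appear in the numerator of $p_e$ for the $\bigO{n\log n/\eps^2}$ support bound to follow.)
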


\begin{proof}
  For each edge $e \in \edges$, let $\strength{e}$ be the strength of
  edge $e$ w/r/t the weighted graph $(\graph, x)$.  By
  \reffact{apx-strengths}, we can compute approximate strengths
  $\apxstrength_e \in [0,\strength{e}]$ such that
  \begin{math}
    \bigO{\sum_{e \in \edges} \frac{x_e}{\apxstrength{e}}} %
    = %
    \bigO{n} %
  \end{math}
  in $\bigO{m \log^2 n}$ time. Let $\delta = d \log n$ where
  $d$ is a sufficiently large constant. For each edge $e \in \edges$, let
  \begin{math}
    p_e = \min{1, \frac{\delta x_e}{\eps^2 \apxstrength{e}}},
  \end{math}
  and let
  \begin{math}
    q_e = %
    \min{1, %
      \frac{\delta \cost{e} x_e}{\eps^2 \sum_{e' \in \edges} \cost{e'}
        x_{e'}}                 %
    },
  \end{math}
  and let $r_e = \max{p_e, q_e}$. We have
  \begin{math}
    \sum_{e \in \edges} r_e %
    \leq %
    \sum_{e \in \edges} p_e %
    + %
    \sum_{e \in \edges} q_e %
    = %
    \bigO{\frac{n \delta}{\eps^2}} %
    + %
    \bigO{\frac{\delta}{\eps^2}} %
    \leq %
    \bigO{\frac{n \delta}{\eps^2}}.
  \end{math}
  Let $H = (\vertices', \edges', x')$ be the random weighted graph
  where each edge $e \in \edges$ is independently sampled with weight
  $x'_e = x_e / r_e$ with probability $r_e$. By
  \reffact{sparsification} and the assumption that
  $x \in \refequation{2ecss}$, with probability
  $1 - \expof{- \bigOmega{\delta}}$, we have
  \begin{align*}
    \sum_{e \in \delta(S)} x'_e \in \epspm \sum_{e \in \delta(S)} x_e
    \geq %
    \epsless 2
  \end{align*}
  for all $S \subset V$. By the
  multiplicative Chernoff inequality, we also have
  \begin{math}
    \probof{\sum_{e \in \edges} c_e x_e' \geq \epsmore \sum_{e \in
        \edges} c_e x_e} \leq \expof{- \bigOmega{\delta}},
  \end{math}
  and
  \begin{math}
    \probof{ %
      \sizeof{\edges'} %
      \geq %
      \epsmore \bigO{\frac{n \delta }{\eps^2}} %
    } %
    \leq %
    \expof{-\delta / \eps^2}.
  \end{math}
  By the union bound, we have
  $\sum_{e \in \delta(S)} x'_e \geq \epsless 2$ for all
  $S \subset V$,
  $\sizeof{\support{x'}} \leq \bigO{\frac{n \delta}{\eps^2}}$, and
  $\sum_{e \in \edges} c_e x_e' \leq \epsmore \sum_{e \in \edges} c_e
  x_e$ with probability of $\geq 1 - \expof{-\bigOmega{\delta}}$. Then
  $y = \epsmore x'$ is feasible for \refequation{2ecss},
  has $\bigO{\frac{n \delta}{\eps^2}}$ nonzeroes, and has cost
  \begin{math}
    \sum_{e \in \edges} \cost{e} y_e %
    \leq %
    \epsmore^2 \sum_{e \in \edges} \cost{e} x_e.
  \end{math}
  We obtain the desired statement by choosing $d$ sufficiently large
  to ensure that the success probability is at least $(1-1/n^2)$ and
  by choosing a smaller $\eps'$ in the above analysis so that
  $(1+\eps')^2 \le (1+\eps)$.
\end{proof}

\reftheorem{sparsifying-2ecss} is an easy consequence of
\reflemma{sparsification} applied to \reftheorem{apx-2ecss}.

\section{Approximation schemes for Christofides' algorithm}

\reffigure{apx-christofides} describes our randomized algorithm that
yields a $\parof{\frac{3}{2} + \eps}$-approximation for \metrictsp via
the LP solution followed by sparsification. It basically implements
Christofides's algorithm on the sparsified graph.

\begin{figure}[t]
  \centering

  \begin{minipage}{.7\paperwidth}
    \ttfamily \bigskip \underline{apx-Christofides($\defweightedgraph$,$\eps$)}
    \begin{enumerate}
    \item Compute the minimum spanning tree $S$ in $\apxO{m}$
      time. Let $T$ be the set of odd-degree vertices in $S$.
    \item By \reftheorem{apx-2ecss}, compute an
      $\epsmore$-approximation $x$ to \refequation{2ecss} in
      $\apxO{m / \eps^2}$ time with probability $1 -
      1/\poly{m}$.
    \item By \reftheorem{sparsifying-2ecss}, in time
      $\apxO{\frac{m}{\eps^2}}$ and with probability
      $1 - \frac{1}{\poly{m}}$, compute a feasible
      point $y \in \reals^E$ for \refequation{2ecss}
      such that
      \begin{itemize}
      \item
        \begin{math}
          \sum_e \cost{e} y_e %
          \leq %
          \epsmore \sum_e \cost{e} x_e
        \end{math}
      \item
        $\sizeof{\support{y}} = \apxO{n / \eps^2}$.
      \end{itemize}
      Note that $y/2$ lies in the dominant of the
      $T$-join polytope, \refequation{t-join-dominant}.
    \item Let $H = (V, \support{y}, \cost)$ be the subgraph of edges
      with nonzero values in $y$. By \reffact{sparse-t-join},
      compute the minimum cost $T$-join $J \subseteq H$ in $\apxO{n^{1.5}}$
      time.
    \item Compute, shortcut, and return an Euler tour on the
      multigraph $S \cup J$.
    \end{enumerate}
    \smallskip
  \end{minipage}

  \medskip

  \begin{minipage}{.7\paperwidth}
    \caption{Pseudocode for a randomized
      $\apxmore \frac{3}{2}$-approximation algorithm to Metric TSP
      w/r/t the shortest path metric of a weighted undirected graph
      (see
      \reftheorem{apx-christofides}).\labelfigure{apx-christofides}}
  \end{minipage}
\end{figure}

\begin{namedtheorem*}{\reftheorem{fast-christofides-intro}}
  \labeltheorem{apx-christofides}
  In
  \begin{math}
    \apxO{\frac{m}{\eps^2} + \frac{n^{1.5}}{\eps^3}}
  \end{math}
  time, with probability at least $1 - \frac{1}{\poly{m}}$,
  \refroutine{apx-Christofides}{\graph}{\epsilon} returns a tour of
  cost at most
  \begin{math}
    \epsmore \frac{3}{2} \opt \refequation{subtour-elimination}.
  \end{math}
\end{namedtheorem*}

\begin{proof}
  With probability $1 - 1/\poly{m}$, \refroutine{apx-Christofides}
  computes a minimum spanning tree $S$ and the minimum cost $T$-join
  on the odd degree vertices of $S$ within a subgraph $H$ of
  $G$. Since every vertex in the multigraph $S \cup T$ has even
  degree, it has an Eulerian tour, which can be shortcut can returned.

  The cost of the tour is at most
  \begin{math}
    \sum_{e \in S} \cost{e} + \sum_{e \in J} \cost{e}.
  \end{math}
  Since a $\frac{n-1}{n}$-fraction of any feasible solutions to the
  \refequation{subtour-elimination} lies in the spanning tree
  polytope, the cost of $S$ is at most
  \begin{math}
    \sum_{e \in S} \cost{e} \leq
    \opt\refequation{subtour-elimination}.
  \end{math}

  To bound the cost of the $T$-join $J$, we first observe that by
  \reffact{subtour-elimination=2ecss} and \reftheorem{apx-2ecss},
  $\sum_{e \in \edges}\cost{e} x_e \leq \epsmore
  \opt\refequation{subtour-elimination}$.  Applying
  \reftheorem{sparsifying-2ecss} to $x$, with probability $1 -
  \frac{1}{\poly{m}}$, we output a vector $y$ that also lies in
  \refequation{2ecss}, has cost $\sum_{e} \cost{e} y_e \leq
  \epsmore^2 \opt \refequation{subtour-elimination}$, and has support of size
  $\norm{y}_0 = \bigO{n \lnof{n} / \eps^2}$. Since $y/2$ lies in the
  dominant of the $T$-join polytope, the minimum cost $T$-join $J$ in
  the support of $y$ has cost at most $\sum_{e \in J} \cost{e} \leq
  \sum_e \cost{e} y_e/2$. Thus, the cost of a $(1+\eps)$-approximate
  $T$-join is $(1+O(\eps)) \opt\refequation{subtour-elimination}$.

  To bound the running time, consider the steps as enumerated in
  \reffigure{apx-christofides}. Step 1, computing the minimum spanning
  tree, takes $\apxO{m}$ time. Step 2 computes an approximate solution
  $x$ to \refequation{2ecss}, which by \reftheorem{apx-2ecss} can be computed
  in $\apxO{m / \eps^2}$ time. Step 3 sparsifies $x$ to produce a
  point $y$ with
  \begin{math}
    \sizeof{\support{y}} %
    = %
    \bigO{\frac{n \log n}{\eps^2}}
  \end{math}
  time. Step 4 applies a $(1+1/n)$-approximate $T$-join algorithm
  from \reffact{sparse-t-join} to the subgraph induced by the support of
  $y$, which contains $\apxO{n/\eps^2}$ edges, and takes
  $\apxO{\frac{n^{3/2}}{\eps^3}}$-time w/r/t the input graph
  $\graph$. The two bottlenecks are approximating \refequation{2ecss}
  and computing the $T$-join, and the total running time is
  \begin{math}
    \apxO{\frac{m}{\eps^2} + \frac{n^{3/2}}{\eps^3}}.
  \end{math}
\end{proof}

\FloatBarrier

\paragraph{Remarks:} \citet{GenovaW17} did an experimental
evaluatation of several variants of ``Best-of-many Christofides''
heuristics for \metrictsp. These heuristics are based on choosing a
\emph{random} spanning tree $T$ from an appropriate distribution that
is based on an LP solution $x$, and then using that tree in the
Christofides's heuristic. One of these heuristics is to decompose a
scaled version of the LP solution $x$ into a convex combination of
trees, and use the technique of swap-rounding \citep{CVZ10} to
generate a random spanning tree from the convex combination. The work
in \cite{cq-17-soda} describes a near linear time algorithm to
decompose $x$ into a $(1-\eps)$-approximate convex combination of
spanning trees. The particular structure of the decomposition, we
believe, should allow one to implement swap-rounding step also in near
linear time. This would lead to an implementation whose running time
is similar to the one we have in this paper for the basic
Christofides heuristic. We also believe that the ideas in
\cite{cq-17-focs}, and the ones here, will extend to develop faster
approximation algorithms for the $s$-$t$-path TSP problem that has
received substantial attention recently; we refer the reader to
\citep{AnKS15,vygen-12,SeboV16}.

\paragraph{Acknowledgments:} CC thanks a conversation with Sanjeev
Khanna, and a talk by David Shmoys on TSP that reminded him of Wolsey's
analysis, for inspiration. Both happened at the Simons Institute,
Berkeley during the workshop ``Discrete Optimization via Continuous
Relaxation'', September 2017. We thank Xilin Yu for several discussions
on metric matchings and related problems.

\bibliographystyle{plainnat}
\bibliography{christofides}

\end{document}

%%% Local Variables:
%%% mode: latex
%%% TeX-master: t
%%% End: